\def\etal.{et\penalty50\ al.}
\theoremstyle{plain}
\newtheorem{theorem}{Theorem}[section]
\newtheorem{lemma}[theorem]{Lemma}
\newtheorem{corollary}[theorem]{Corollary}
\theoremstyle{definition}
\theoremstyle{remark}
\theoremstyle{plain}
\newtheorem*{theorem*}{Theorem}
\DeclareMathOperator{\OPT}{OPT}
\newcommand{\RSiC}{$RSiC$\xspace}
\newcommand{\FF}{\textit{FirstFit}\xspace}
\newcommand{\BF}{\textit{BestFit}\xspace}
\newcommand{\NF}{\textit{NextFit}\xspace}
\newcommand{\WF}{\textit{WorstFit}\xspace}
\newcommand{\AF}{\textit{AnyFit}\xspace}
\newcommand*\samethanks[1][\value{footnote}]{\footnotemark[#1]}
\algnewcommand{\LineComment}[1]{\State \(\triangleright\) #1}
\providecommand{\keywords}[1]
{
  \small	
  \textbf{\textit{Keywords---}} #1
}
\title{Renting Servers in the Cloud: The Case of Equal Duration Jobs}
\author{Mahtab Masoori\\
Concordia University, CSSE\\
\href{mailto:mahtab.masoori@concordia.ca}{mahtab.masoori@concordia.ca\thanks{Research is supported by NSERC.}}\and
Lata Narayanan\\
Concordia University, CSSE\\
\href{mailto:lata.narayanan@concordia.ca}{lata.narayanan@concordia.ca}\samethanks \and
Denis Pankratov\\
Concordia University, CSSE\\
\href{mailto:denis.pankratov@concordia.ca}{denis.pankratov@concordia.ca}\samethanks}
\date{\today}
\begin{document}

\maketitle
\begin{abstract}
Renting servers in the cloud is a generalization of the bin packing problem, motivated by job allocation to servers in cloud computing applications. Jobs arrive in an online manner, and need to be assigned to servers; their duration and size are known at the time of arrival. There is an infinite supply of identical servers, each having one unit of computational capacity per unit of time. A server can be rented at any time and continues to be rented until all jobs assigned to it finish. The cost of an assignment is the sum of durations of rental periods of all servers. The goal is to assign jobs to servers to minimize the overall cost while satisfying server capacity constraints. We focus on analyzing two natural algorithms, \NF and \FF , for the case of jobs of equal duration. It is known that the competitive ratio of \NF and \FF are at most  $3$  and $4$ respectively for this case.
We prove a tight bound of $2$ on the competitive ratio of \NF.  For \FF, we establish a lower bound of $\approx 2.519$ on the competitive ratio, even when jobs have only two distinct arrival times 0 and $t$. Using the weight function technique, we show that this bound is almost tight when there are only two arrival times; we obtain an upper bound of $2.565$ on the asymptotic competitive ratio of \FF. In fact, we show  an upper bound of $\frac{168}{131}(1+t)$ on the asymptotic competitive ratio for any $t > 0.559$. For the case when jobs have arrival times $0$ and $1$ and duration $2$, we show a lower bound of $\approx 1.89$ and an upper bound of 2 on the strict competitive ratio of \FF. Finally, we show an upper bound of $3/2$ on the competitive ratio of long-running uniform servers.
\end{abstract}

\keywords{Scheduling; Online Algorithms; Renting Servers in the Cloud; Bin Packing; Competitive Analysis; Weight Function Technique}

\section{Introduction}
Consider a cloud service provider that makes resources available to its clients via a cloud-based system. Over time, various jobs are submitted to the system by multiple clients and must be assigned to the servers in order to satisfy their demanded resources. 
Each server has limited resources available and job assignment must respect these capacity constraints. The company's operating costs are proportional to the total time of servers being active. This is an online problem since jobs arrive in the system one after another, and the scheduler must decide how to assign these jobs to the servers to minimize overall costs. In our scenario migrating a job from one server to another after it has been assigned is prohibitively expensive, so all the assignment decisions are fully irrevocable. A natural question arises here: how should jobs be assigned to servers in order to obtain the minimum cost? Two natural strategies are \NF and \FF. In \NF, the scheduler keeps track of at most one ``open'' server and keeps assigning jobs to it while the server capacity can accommodate the demanded resources. Once a new job requests more resources than the available amount provided by the open server at that time, then the server is {\em closed} (meaning no future jobs will be assigned to it unless it is released), and a new server is opened to accommodate the job. In the \FF strategy, every active server is considered {\em open} when a job arrives. The servers are scanned in the order in which they were opened and the first server that can accommodate the job is assigned that job. If no such server exists a new server gets opened.

The above scenario is formalized in the so-called \textit{renting servers in the cloud} problem, denoted by \RSiC, which was recently introduced by Li \etal.~\cite{li2014dynamic}. There is an infinite supply of identical servers, each having capacity $1$. The job scheduler needs to process a sequence of jobs $\sigma=(\sigma_1, \ldots, \sigma_n)$, where job $i$ is a triple $\sigma_i = (x_i, s_i, f_i)$ such that $x_i \in (0,1]$ is the size of a job, $s_i$ is the arrival time of a job, and $f_i$ is the finishing time of a job, and assign each job to a server. The ordering of jobs satisfies $s_1 \le s_2 \le \cdots \le s_n$. The input sequence is chosen by the adversary subject to this restriction. A server can be rented starting at any time $t$ and it continues to be rented until all jobs assigned to it terminate. The cost of a single server is the duration of its rental period. The cost of a whole solution is the sum of costs of all rented servers.
The goal is to assign jobs to servers to minimize the overall cost while satisfying server capacity constraints, i.e., at every time $t$ the sum of all sizes of jobs assigned to a given server and active at time $t$ should not exceed $1$. Jobs must be processed in the adversarial order that they are given. Thus, this problem on an instance with $s_i = 0$ and $f_i = 1$ for all $i \in \{1, \ldots, n\}$ is equivalent to the classical bin packing problem on the corresponding instance $(x_1, \ldots, x_n)$. Clearly, \RSiC is a generalization of the bin packing problem, that introduces a temporal component to the classical problem. Indeed, \FF and \NF algorithms, as defined above, are direct analogues of the same algorithms for bin packing. The performance of online algorithms is measured by the notion of competitive ratio, which is the worst-case ratio (over the inputs) between the cost achieved by the online algorithm and the cost achieved by an offline optimal solution that knows the entire input instance in advance. One distinguishes two notions of competitive ratio -- {\em strict}, that is it applies to all inputs, and {\em asymptotic}, that is it applies only to large enough inputs ignoring $o(OPT)$ terms. This distinction becomes very important for the \RSiC problem. It is easy to see that \RSiC in its generality does not admit algorithms with a constant competitive ratio. Thus, the research so far has focused on the inputs parameterized by $\mu = \frac{\max_i (f_i -s_i)}{\min_i (f_i-s_i)}$, which is the ratio between the maximum and minimum duration of a job in the input.

The research literature on the classical bin packing problem is quite rich and spans over $50$ years of active research. The asymptotic competitive ratio of 1.7 for \FF was established in the early seventies~\cite{princeton1971performance}. It took over $40$ years to establish the same result for the strict competitive ratio~\cite{dosa2013first}. By now there are many different proofs of competitiveness of \FF and its variants.

The key technique that is used to establish the majority of upper bounds is called the {\em weighting technique}. The idea is to define a weight function $w$ which modifies sizes of items that satisfies the following two properties:
\begin{enumerate}
	\item for most of the bins $B$ of a given algorithm $ALG$ we have $\sum_{x \in B} w(x) \ge 1$,
	\item for any bin $B$ of an offline optimal solution $OPT$ we have $\sum_{x \in B} w(x) \le \rho$.
\end{enumerate}
If such a weight function $w$ is found for a given algorithm $ALG$, it implies a competitive ratio of at most $\rho$, since the sum of all weights of items is at least $ALG$ and at most $\rho \cdot OPT$. This technique can be seen to be an instance of the more general primal-dual framework applied to bin packing. Variations of this technique involve making the weights dependent on the schedules of $ALG$ and/or $OPT$, splitting the weight function into several functions, and applying amortization.

\RSiC is also related to the dynamic bin packing problem. In dynamic bin packing, the items arrive and depart from the system, and repacking of previously packed items may or may not be allowed depending on the version of the problem. The goal is to maintain a small number of bins at all times compared to an optimal solution. In \RSiC repacking is not allowed and the objective function is different. In fact, \RSiC aims to reduce the {\em total cost} of all rented servers. 
We describe the known results for the \RSiC problem in greater detail in Section~\ref{Section:Related-Work}. Due to the vast interest in cloud computing in recent years, this problem has received much attention from applied researchers and now there are attempts to prove theoretical guarantees. However, there is a significant gap between the known lower and upper bounds for this problem and for \NF and \FF, in particular. This is not surprising, in light of the history of analysis of \FF and its variants for the vanilla bin packing problem. Adding a temporal component makes the problem significantly more difficult to analyze. 

The known upper bound proofs on the performance of \FF in case of \RSiC are based on \emph{utilization}, denoted by $util(\sigma)$, and \emph{span}, denoted by $span(\sigma)$, of the input $\sigma$, defined as follows:
\[ util(\sigma) = \sum_{i=1}^n x_i (f_i-s_i) \text{ and } span(\sigma) = \left| \bigcup_{i=1}^n [s_i,f_i] \right|.\]
Thus, the utilization can be thought of as the total volume (size times duration) of all jobs, and span is the total duration of all jobs ignoring overlaps. The previous results, which are based on span and utilization, such as the upper bound $2\mu + 1$ for \NF and $\mu + 3$ for \FF, include additive terms $+1$ and $+3$ which arise from applying the span bound, and which we suspect to be too loose. One of the ways of making progress towards analyzing this problem is to reduce or even completely eliminate these additive terms. There is a natural reason for the appearance of these additive terms and it has to do with the distinction between the asymptotic and strict competitive ratios. Let $ALG(t)$ denote the number of servers that are used by $ALG$ at time $t$, and define $OPT(t)$ similarly. Suppose one proves that for all times $t$ we have $ALG(t) \le \rho \cdot OPT(t) + \delta$, which would be analogous to an ``asymptotic bound'' for a fixed time $t$, so to speak (note that this bound does not have to apply for each time $t$, and amortization tricks may be used to prove this bound on average, but we assume such a bound for each $t$ for simplicity of the argument). The overall cost of the algorithm is $ALG = \int_0^\infty ALG(t)\;dt \le \int_0^\infty \rho \cdot OPT(t) + \delta \;dt = \rho \cdot OPT + \delta \cdot span(\sigma) \le (\rho + \delta) OPT$. Thus, an ``asymptotic bound'' for a fixed time $t$ with competitive ratio $\rho$ gets converted into the overall bound of $(\rho + \delta)$ for the problem. The ``asymptotic bounds'' for a fixed time $t$ arise because often one needs to ignore one or two special servers to argue the desired inequality. Thus, one approach towards tightening these bounds is to attempt to prove ``strict bounds'' for a fixed time $t$, i.e., $ALG(t) \le \rho' OPT(t)$. This would avoid the appearance of the $span(\sigma)$ term. This is akin to the difference between proving the asymptotic competitive ratio of \FF and strict competitive ratio of \FF for the classical bin packing. We view this as the main obstacle to improving the known results.

We now describe one possible program to  overcome the above-mentioned obstacle.  First, generalize the weight function technique from the bin packing problem to the \RSiC problem to prove asymptotic bounds in restricted cases. Second, optimize these bounds and make them strict in restricted cases. Third, extend the bounds obtained in restricted cases to more general cases ultimately culminating in the tight analysis of general \RSiC.

In this paper we make progress towards the above program.  
We focus on the special case of $\mu = 1$, which we call equal duration. We also investigate a particular case of the \RSiC problem in which jobs have only two arrival times. Jobs having only two arrival times may seem like an artificial restriction; however, some real-life applications tend to have this property. For instance, in the Map-Reduce model~\cite{dean2008mapreduce}  the input is partitioned into roughly equal blocks that are processed in two phases: Map and Reduce, corresponding to input items arriving at two times in \RSiC model. During the map phase, the user's map function develops a set of key/value pairs in order to generate a set of intermediate key/value pairs. In the second phase, this set of intermediate key/value pairs will be passed to the user-written reduce function, which will merge the values related to the same intermediate key. Map-Reduce is widely used in cloud computing for automatic parallel processing and large-scale data distribution. 

Our main results for \RSiC for jobs of equal duration are summarized below: 
\begin{enumerate}
	\item We show a tight bound of $2$ on \NF; recall that the previous best bound was $2\mu+1=3$ for our case of equal duration jobs (Theorem~\ref{thm:ub_nf}).
	
	\item We demonstrate a lower bound of $\dfrac{34}{27}(1+t)$ on the competitive ratio for \FF where jobs have duration $1$ and arrival times $0$ and $t$ for $t \in (0,1)$ (Theorem~\ref{L.B.Theorem}). This improves upon the bin packing lower bound when $t > 0.35$. 

	\item Using the weight function technique, we show an upper bound of $\dfrac{168}{131}(1+t)$ on the asymptotic competitive ratio of \FF where jobs have duration $1$, arrival times $0$ and $t$ for $t \in [0.559,1)$ (Theorem~\ref{thm:ff_asymp_ub}).
 To the best of our knowledge this is the first time the weight function technique has been used in the analysis of \RSiC. 
When $t\rightarrow 1$, our results show that the competitive ratio of 
\FF lies between $2.519$ and $2.565$.

 \item We prove a strict competitive ratio of $2$ for \FF for the case when each job has duration $1$ and arrival times are either $0$ or  $1/2$ (Theorem~\ref{thm:2ub}). 
 
		\item We show a tight asymptotic bound of $2/3$ on the load of long-running uniform \FF servers when jobs have duration $2$ and integer arrival times $0, 1, 2, \ldots$ (Corollary~\ref{corollary:uniform_long_servers}). This implies a competitive ratio of $\approx 3/2$ for \FF in the long-running uniform case.

\end{enumerate}
Note that the last result demonstrates that if we consider the case of jobs of duration $2$ and integer arrival times $0, 1, 2, \ldots$ then those servers running for a long time automatically have a high load in the limit. This suggests that the difficult inputs for \FF are those where servers are short lived. 


The rest of this paper is organized as follows. The related work is summarized in Section~\ref{Section:Related-Work}. Section~\ref{Section:Preliminaries} introduces some basic notation and definition for the renting servers in the cloud problem. In Section~\ref{Section:NextFit}, we discuss the bounds for the \NF algorithm. Then, in Section~\ref{Section:FirstFit}, we discuss the bounds for the \FF algorithm. In Sections~\ref{sec:ff_lb} and ~\ref{sec:general_2_arrivals}, we present our lower and upper bounds for the \FF algorithm, respectively. The strict bound for \FF is analyzed in Section~\ref{Section:Strict}. We also discuss long running uniform \FF servers in Section~\ref{sec:long_servers}, which suggests that shorter-lived servers may result in worse competitive ratio. 
Finally, conclusions are made in Section~\ref{Section:Conclusion}. 
 
\section{Related Work}\label{Section:Related-Work}

The research literature related to bin packing and its variants is rather large. Google Scholar produces 500,000 records in response to ``bin packing'' query. We cannot do it justice in this short section, so instead we mention only the most relevant (to this paper) fundamental results, restricted to the online setting and mostly to the \FF and \NF algorithms.

The bin packing problem~\cite{lee1985simple} is one of the most fundamental \textit{NP-hard}~\cite{garey1979computers} problems in the field of algorithm design that has received a lot of attention among researchers. In the online setting of this problem, we are given a sequence of items with sizes in the range $(0,1]$. The goal is to pack these items in the minimum number of bins of capacity $1$. Many online algorithms have been designed for the bin packing problem; among them, \FF is among the most famous ones. This algorithm keeps the bins in the order in which they were opened and places a given item in the first bin that has enough space for it. If such a bin does not exist, it opens a new one.  Ullman in~\cite{princeton1971performance} established the asymptotic bound on the competitive ratio of \FF of the form $1.7 \cdot OPT +3$. The additive term $3$ was improved multiple times~\cite{garey1972worst, garey1976resource, xia2010tighter} until $2013$ when Dosa and Sgall demonstrated the absolute ratio $1.7$~\cite{dosa2013first}. Another algorithm that we investigate in this paper is related to the bin packing \NF algorithm. This algorithm only allows one bin to be open at a time. If there is space, it assigns a given item to the open bin; otherwise, it closes that bin and opens a new one. Johnson et al.~\cite{johnson1974worst} proved that the competitive ratio of \NF is $2$. 
There are other algorithms for this problem as well such as \BF with ratio $1.7$~\cite{princeton1971performance} and the $HARMONIC$ family of algorithms. The currently best known algorithm is from the latter family, and has competitive ratio of $1.57829$~\cite{balogh2017new}.
\FF and \BF belong to the more general family of \AF algorithms, which is essentially the family of greedy algorithms (never open a new bin unless you have to). In general, \AF has tight competitive ratio $2$; however, rather large sub-families of \AF have tight competitive ratio of $1.7$~\cite{johnson1973near}. However, in this paper, we do not go into detail about those algorithms.

The \emph{dynamic} bin packing, generalizing the vanilla bin packing, was first proposed by Coffman et al.~\cite{coffman1983dynamic}. In this problem, items with different sizes arrive and depart in sequential order. The size and arrival time of an item are revealed when the item arrives, but the duration of a job is hidden. Thus, the algorithm is notified of the departure of an item at the time the item departs from the system. The  objective here is to minimize the maximum number of bins used at any time during the execution. 
Coffman et al. consider the version where repacking of items is not allowed. They established a lower bound of $2.38$ for any online algorithm which was subsequently improved several times; the currently best known lower bound is   $2.66$~\cite{wong20128}.
 
 In the discrete version of the problem, each item has size $1/k$ for some integer $k$. With this restriction on the input, the competitive ratio of \BF and \WF were proved to be $3$ by Chan et al. in~\cite{chan2008dynamic}. In the same paper, the authors proved that \FF has a competitive ratio $2.4942$ for the discrete problem, which was later improved to $2.4842$ in~\cite{han2010dynamic}. However, no online algorithm can do better than $2.428$~\cite{chan2008dynamic}.

The \RSiC problem, which can be viewed as a new type of dynamic bin packing, was introduced by Li et al.~\cite{li2014dynamic} for the first time. In this problem, bins are referred to as servers, and the cost of each server is a function of the duration of its usage. Li et al. showed that no algorithm from the \AF family can achieve a competitive ratio less than $\mu$, which is the ratio of the maximum to the minimum duration of all items in the input sequence.
They showed that the competitive ratio of \FF is at most $2 \mu + 13$. Li et al. also analyzed \FF under restricted input families. More specifically, they proved that the competitive ratio of \FF  when the sizes of jobs are greater than or equal to $1/k$, where $k$ is a positive integer, is at most $k$, and when the sizes of jobs are less than or equal to $1/k$, then \FF has the competitive ratio at most $\frac{k}{k-1} \mu + \frac{6k}{k-1} + 1$. Li et al.~\cite{li2014dynamic} also presented the $Modified$ \FF algorithm which achieves  competitive ratio of at most $\dfrac{8}{7}\mu + \dfrac{55}{7}$ when $\mu$ is not known in advance and  $\mu + 8$ when  $\mu$ is known. Later, Kamali and Lopez-Ortiz~\cite{kamali2015efficient} presented a general lower bound for the \RSiC problem (not just for \AF algorithms) of  at least $ \dfrac{ \mu}{1+\varepsilon(\mu-1)}$ where $\varepsilon$ is a lower bound on the sizes of jobs. 
Kamali and Lopez-Ortiz also considered the \NF algorithm and proved that it has a competitive ratio of $2\mu + 1$ and introduced a variant of \NF with competitive ratio $\mu + 2$. Tang et al.~\cite{TangLRC2016} improved the analysis of \FF establishing the upper bound of $\mu+4$ on the competitive ratio. In his PhD thesis~\cite{ren2018combinatorial}, Ren improved the upper bound on the competitive ratio of \FF further to $\mu + 3$, which is the current state-of-the-art.
\section{Preliminaries}\label{Section:Preliminaries}

We use $[n]$ to denote the set $\{1, 2, \ldots, n\}$. The input to the general \RSiC problem is a sequence of $n$ items $\sigma = (\sigma_1, \ldots, \sigma_n)$, where item $i$ is a triple $\sigma_i = (x_i, s_i, f_i)$ denoting a job of size $x_i \in (0, 1]$ starting at time $s_i \in [0, \infty)$ and finishing at time $f_i > s_i$. The \emph{duration} of a job $i$ is $f_i - s_i$. We say that a job $\sigma_i = (x_i, s_i, f_i)$ is \emph{active} or \emph{alive} at time $t$ if $t \in [s_i, f_i)$. Two jobs $\sigma_i$ and $\sigma_j$ are said to be \emph{non-overlapping} if $[s_i, f_i) \cap [s_j, f_j) = \emptyset$, otherwise they are said to be \emph{overlapping}. Note that representing active times of jobs by half-open intervals allows us to consider the two jobs such that one starts at exactly the same time as the other one finishes as non-overlapping. The jobs are presented in an order that respects their starting times, i.e., $s_1 \le s_2 \le \cdots \le s_n$. For the same starting time $s$, jobs starting at time $s$ are presented in an adversarial order. An important parameter of the input sequence is $\mu(\sigma) := \frac{\max_i (f_i - s_i) }{\min_i (f_i-s_i)}$, which is the ratio of the maximum duration of a job to the minimum duration of a job in the sequence. The case of all jobs having \emph{equal duration} corresponds precisely to $\mu = 1$. In the case of jobs having equal duration $d$ (which shall be clear from the context) we specify each job $\sigma_i$ by its size and start time to simplify the notation, i.e., $\sigma_i=(x_i, s_i)$, since the finishing time $f_i$ is a function of the start time, i.e., $f_i = s_i + d$.

Next, we discuss some standard terminology that is used to describe a state of a server. \emph{Opening} a server refers to the beginning of the rent period of the server when the first job is assigned to it. A server is considered \emph{active} or \emph{alive} at time $t$ if it has at least one scheduled job that is active at $t$. A server is considered \emph{closed} when no future job will be assigned to it. Note that this is a property of an algorithm rather than the server itself, and that a server may remain active even after it is closed. When all jobs assigned to the server finish, the server is \emph{released}. Thus, a server is \emph{not active} after it is released.

In this paper we focus on the case of equal duration of jobs and two arrival times $0$ and $t \in (0, 1)$, where $t$ is arbitrary, and jobs of duration $1$. In some sections, we concentrate on arrival times $0$ and $1$ and jobs of duration $2$, unless stated otherwise; by scaling, this setting is equivalent to arrival times $0$ and $1/2$ and jobs of duration $1$.
Which of these notations is used in a particular section will be clear from the context. 

For $t \in [0, \infty)$ we use  $X(t)$ to denote the sum of all sizes of jobs that have start time $t$, i.e., 

\[X(t) = \sum_{i \in [n] : s_i =t} x_i\]

For $t_1 < t_2$ we use $X(t_1, t_2)$ to denote the sum of all sizes of jobs that have start time in the interval $(t_1, t_2]$:
\[X(t_1, t_2) = \sum_{i \in [n] : s_i \in \left(t_1,t_2\right]}x_i\]

For a server $B$ we use $x(B,t)$ to denote the sum of all sizes of jobs that are scheduled on $B$ and active at time $t$. The value $x(B,t)$ is also called the \emph{load} of server $B$ at time $t$.

For an algorithm $ALG$ we use $ALG(\sigma)$ to denote the value of the objective function achieved by $ALG$ on the input $\sigma$. For a specific time $t$ we use $ALG(\sigma, t)$ to denote the number of servers that are active (i.e., ``alive'') at time $t$. If the last finishing time of any job is $T$ then we have:
\[ ALG(\sigma) = \int_0^T ALG(\sigma, t) \; dt.\]
We omit $\sigma$ from the above notation, when it is clear from the context. We use similar notation for an optimal solution denoted by $OPT$.

When the input sequences result in all \FF servers having the same start time and the same finishing time (which, consequently, must be the largest finishing time of any job in the sequence), we refer to this scenario as the case of \emph{uniform servers}.

\section{Tight Competitive Ratio $2$ for \NF}\label{Section:NextFit}

In this section we prove a tight bound of $2$ on the competitive ratio of the \NF algorithm for the \RSiC problem when $\mu = 1$ and jobs have arbitrary start times. Recall that the \NF algorithm keeps at most one server open at any time, i.e., a server that can be assigned newly arriving jobs. If a given job does not fit in the open server, the algorithm closes the server (meaning it will not assign any new jobs to it from this point onward), opens a new server, and assigns the job to the newly opened server. Note that \NF never reopens a closed server. 
The lower bound of $2$ follows immediately from the lower bound on the competitive ratio of \NF for the bin packing problem and the fact that renting servers in the cloud problem generalizes bin packing. 
The upper bound of $2$ on the competitive ratio of \NF follows from the theorem below, which shows that the upper bound holds not only on the total cost produced by \NF, but, in fact, it holds at each individual time $t$.
\begin{theorem}\label{thm:ub_nf}
	Suppose that the duration of each job in $\sigma$ is exactly $1$ while arrival times $s_i$ are arbitrary. Then, for every time $t \in [0, \infty)$ we have:
	\[\NF(\sigma, t) \le 2 \cdot OPT(\sigma, t).\]
\end{theorem}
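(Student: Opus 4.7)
The plan is to prove the pointwise bound $\NF(\sigma, t) \le 2 \OPT(\sigma, t)$ by a pairing argument on the \NF servers active at time $t$, combined with the elementary inequality $\OPT(\sigma, t) \ge \lceil L(t) \rceil$, where $L(t) := \sum_{i : t \in [s_i, f_i)} x_i$ denotes the total size of jobs alive at $t$. The starting observation is that, because every job has duration exactly $1$, the set of \NF servers active at $t$ forms a \emph{contiguous block} in the order in which they were opened. Indeed, if $B$ is active at $t$ and $B'$ is its \NF-successor with opening time $\alpha' \le t$, then the fact that some job on $B$ is alive at $t$ forces $\alpha' > t - 1$, so the first job on $B'$ has start in $(t-1, t]$ and is also alive at $t$. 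I would relabel the active servers in opening order as $B_1, \ldots, B_k$, where $k = \NF(\sigma, t)$, and let $\alpha_j$ denote the opening time of $B_j$.

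Next I would prove the pair lemma: for every $i \ge 3$, $x(B_{i-1}, t) + x(B_i, t) > 1$. The argument has four steps. First, since $B_{i-2}$ is active at $t$ (by contiguity), $\alpha_{i-1} > t - 1$; and since $B_i$ is active at $t$, $\alpha_i \le t$. Second, every job assigned to $B_{i-1}$ has start in $[\alpha_{i-1}, \alpha_i] \subseteq (t-1, t]$ and is therefore alive at $t$, so $x(B_{i-1}, \alpha_i) = x(B_{i-1}, t)$. Third, the first job $j_i$ on $B_i$ failed to fit in $B_{i-1}$ at time $\alpha_i$, which is precisely the reason \NF opened $B_i$, whence $x(B_{i-1}, \alpha_i) + x_{j_i} > 1$. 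Fourth, $j_i$ is alive at $t$ and resides on $B_i$, so $x_{j_i} \le x(B_i, t)$; combining the three yields the desired inequality. I would then pair the active servers disjointly as $(B_2, B_3), (B_4, B_5), \ldots$, leaving $B_1$ (and $B_k$ when $k$ is even) as a singleton. Summing the pair lemma across all pairs gives: if $k = 2m+1$ then $L(t) > m$, whence $\OPT(\sigma, t) \ge m+1$; if $k = 2m$ then $L(t) > m - 1$, whence $\OPT(\sigma, t) \ge m$. In both cases $k \le 2 \OPT(\sigma, t)$. The small cases $k \in \{0, 1, 2\}$ follow directly from $\OPT(\sigma, t) \ge 1$ whenever any job is alive at $t$.

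The main obstacle is the contiguity observation. Without it, the active \NF servers could in principle be interleaved with inactive ones, and the pair lemma would fail, since it crucially uses $B_{i-2}$ being active to deduce $\alpha_{i-1} > t - 1$ and thereby equate $x(B_{i-1}, \alpha_i)$ with $x(B_{i-1}, t)$. The contiguity argument relies essentially on the equal-duration hypothesis: a job with start in $(t-1, t]$ automatically survives until time $t$. A secondary subtlety is that the leftmost pair $(B_1, B_2)$ need \emph{not} satisfy a combined load exceeding $1$ at time $t$, since some of $B_1$'s jobs that triggered the opening of $B_2$ may have already expired; this is why $B_1$ is treated as a singleton, and the theorem's tightness at small $k$ (where $L(t)$ can be strictly less than $1$ while $k = 2$) explains why the pure pair argument cannot be pushed further on this end.
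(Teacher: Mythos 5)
Your proposal is correct and follows essentially the same route as the paper's proof: order the active \NF servers by opening time, show that all but the first were opened in $(t-1,t]$ (your contiguity observation is the same structural fact the paper derives), establish $x(B_{i-1},t)+x(B_i,t)>1$ for consecutive active servers from index $2$ onward, sum over the disjoint pairs $(B_2,B_3),(B_4,B_5),\dots$, and compare with $OPT(\sigma,t)\ge\lceil L(t)\rceil$. The only difference is presentational — you make the contiguity step and the reason the "does not fit" inequality applies slightly more explicit than the paper does.
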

\begin{proof}
	If there are no active jobs at time $t$ then $\NF(t) = 0$ and the claim trivially holds.	If there are active jobs at time $t$ and $\NF(t) \le 2$ then we have $OPT(t) \ge 1$ and the claim also trivially holds.
	
	It remains to show the claim for those $t$ where $\NF(t) \ge 3$. Let $k = \NF(t)$ and let $B_1, \ldots, B_k$ be the servers of $\NF$ that are still active at time $t$ and ordered in the non-decreasing order of their opening times. Observe that since \NF maintains only one open server at a time, servers $B_1, \ldots, B_{k-1}$ must be closed at time $t$. We also use the following simple fact based on duration of each job being exactly $1$:  a job is active at time $t$ if and only if its arrival time is in $(t-1, t]$. Therefore we have:
	\begin{equation}
	\label{eq:total}
	X(t-1, t) = \sum_{i=1}^k x(B_i, t).
	\end{equation}
	Since $B_1$ is active at time $t$, it must have had a job scheduled on it in time interval $(t-1, t]$. Also, the job that closed $B_1$ must have arrived in $(t-1, t]$. We conclude that servers $B_2, B_3, \ldots, B_k$ were opened in $(t-1, t]$ and thus all jobs scheduled on these servers since their opening are still active at time $t$; see Figure~\ref{fig:NFfig}.
 \begin{figure}[H]
	\begin{center}
		\includegraphics[scale=1]{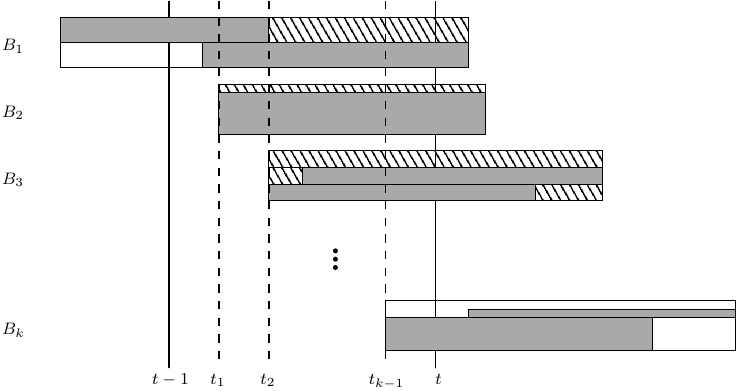}
		\caption{$B_1$ must have a job that arrived in $(t-1,t]$. Server $B_i$ was closed at time $t_i \in [t-1,t]$ by a job that was placed in $B_{i+1}$.}
		\label{fig:NFfig}
	\end{center}
\end{figure}

 Consider a server $B_i$ that got closed at time $t_i \in (t-1, t]$ by a job $\sigma_{z_i}$. This means that job $\sigma_{z_i}$ was scheduled on server $B_{i+1}$ so we have the inequality:
	\begin{equation*}
	x(B_i, t_i) + x(B_{i+1}, t_i) > 1 \;\;\; \text{for } i \in \{2, \ldots, k-1\}.
	\end{equation*}
	Combining this with the fact that all jobs scheduled on servers $B_2, \ldots, B_k$ are still active at time $t$, we conclude:
	\begin{equation}
	\label{eq:greater-one}
	x(B_i, t) + x(B_{i+1}, t) > 1 \;\;\; \text{for } i \in \{2, \ldots, k-1\}.
	\end{equation}
	Observe that the above inequality might fail for $i = 1$ since the jobs active at time $t_1$ in $B_1$ might finish before time $t$. Consider grouping servers in maximally many disjoint pairs $(2, 3)$, $(4, 5)$, $\ldots$, $(2j, 2j+1)$. The last pair must satisfy $2j+1 \le k$, which implies that $j = \lfloor \frac{k-1}{2} \rfloor$. Summing up equations~\eqref{eq:greater-one} corresponding to this pairing we derive:
	\begin{equation*}
	\sum_{q=1}^j (x(B_{2q}, t) + x(B_{2q+1}, t)) \geq j.
	\end{equation*}
	Combining this with equation~\eqref{eq:total} we conclude that $X(t-1, t) > j$. Rounding the left-hand side up we get $\lceil X(t-1, t) \rceil > j$, which implies that $\lceil X(t-1, t) \rceil \ge j+1$ by integrality. Multiplying both sides by $2$ we get: 
		\[2 \lceil X(t-1, t) \rceil \ge 2j + 2 = 2(j+1) = 2 \left( \lfloor \frac{k-1}{2} \rfloor+ 1\right) \ge 2 ( (k/2-1) +1) = k.\]
To conclude, we have demonstrated that $\NF(t) \le 2 \lceil X(t-1, t) \rceil$ and the equation $OPT(t) \ge \lceil X(t-1, t) \rceil$ is obvious.
\end{proof}
\begin{corollary}
	The competitive ratio of \NF for the case equal duration and arbitrary arrival times is at most $2$.
\end{corollary}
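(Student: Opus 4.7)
The plan is to derive the corollary directly by integrating the pointwise inequality established in Theorem~\ref{thm:ub_nf}. Since that theorem already shows $\NF(\sigma, t) \le 2 \cdot OPT(\sigma, t)$ for every single time $t \in [0, \infty)$, essentially all the work has been done; what remains is to lift this instantaneous comparison to a statement about the total costs that define the competitive ratio.

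Concretely, I would let $T$ denote the largest finishing time of any job in $\sigma$ (so that both $\NF(\sigma, t)$ and $OPT(\sigma, t)$ vanish for $t > T$), and then use the identity
\[ \NF(\sigma) = \int_0^T \NF(\sigma, t)\; dt \quad \text{and} \quad OPT(\sigma) = \int_0^T OPT(\sigma, t)\; dt \]
given in Section~\ref{Section:Preliminaries}. Applying Theorem~\ref{thm:ub_nf} pointwise and integrating preserves the inequality, yielding
\[ \NF(\sigma) = \int_0^T \NF(\sigma, t)\; dt \le \int_0^T 2 \cdot OPT(\sigma, t)\; dt = 2 \cdot OPT(\sigma). \]
Dividing through by $OPT(\sigma)$ (which is strictly positive whenever the instance is nontrivial) gives the desired ratio bound.

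There is no real obstacle here: the pointwise bound is the whole content of the argument, and integration is monotone, so no amortization, no span-versus-utilization trade-off, and no special casing of short-lived servers is needed. The only minor care is to note that the bound in Theorem~\ref{thm:ub_nf} holds at \emph{every} time $t$, not only on average, so the integration goes through without any additive $span(\sigma)$ term of the kind discussed in the introduction. This is precisely the feature that makes the ratio strict rather than asymptotic, and it is what distinguishes the statement from the previously known $2\mu + 1$ upper bound for \NF.
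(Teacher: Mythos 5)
Your proposal is correct and matches the paper's intended argument exactly: the corollary is obtained by integrating the pointwise bound $\NF(\sigma,t) \le 2\cdot OPT(\sigma,t)$ of Theorem~\ref{thm:ub_nf} over time, using the cost identity from Section~\ref{Section:Preliminaries}. Nothing further is needed.
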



\section{\FF}\label{Section:FirstFit}
In this section we consider the \FF algorithm, which unlike \NF does not close servers unless they are no longer active. In the \FF algorithm, a newly arriving job is assigned to the earliest (in the order of opening) active server that can accommodate it. If none of the active servers has enough residual capacity to accommodate the new job, then a new server is opened. When all of the jobs in a server depart, the server is no longer active and it is closed. 

\subsection{Lower Bound}
\label{sec:ff_lb}
In this section, we present a parameterized lower bound on the competitive ratio of \FF for the case of equal duration and derive a couple of consequences of this parameterized lower bound. The adversarial instance consists of two sequences of jobs. In the first sequence all jobs have arrival time $0$ and in the second sequence all jobs have arrival time $t \in (0,1)$, by which our lower bound is parameterized. All jobs have duration $1$ in this instance. The first sequence is simply the nemesis instance for the bin packing problem due to~\cite{garey1972worst}. Recall that this instance creates three groups of \FF servers: (1) those with the load roughly $5/6$, (2) those with the load roughly $2/3$, and (3) those with the load roughly $1/2$. Meanwhile, $OPT$ is able to schedule all the jobs in the same instance into servers of duration $1$ and load close to $1$. Our second sequence consists of jobs that arrive at time $t$ and are used to extend the duration of each existing \FF server by an additive $1-t$. Thus, the items used to extend group (1) servers have size $1/12$ each. Those items used to extend group (2) have size $1/6$ each. Lastly, those items used to extend group (3) have size $1/4$ each. These items can be neatly combined by $OPT$ into servers of duration $1$ and load $1$. The following theorem presents the parameterized lower bound. The proof is self-contained, since we reproduce the nemesis instance from~\cite{garey1972worst} as part of the proof.
\begin{theorem}\label{L.B.Theorem}
The asymptotic competitive ratio of \FF for the \RSiC problem in the case where the duration of each job is $1$ and jobs arrive at times $0$ and $t \in (0,1)$ is at least $\dfrac{34}{27}(1+t)$.
\end{theorem}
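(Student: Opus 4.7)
The plan is to exhibit a two-phase adversarial input (scaled by an integer parameter $k$) and compute $\FF$ and $\OPT$ in the $k\to\infty$ limit, showing that the ratio tends to $\frac{34}{27}(1+t)$.

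First, at time~$0$, I would reproduce the Garey--Graham--Ullman bin-packing nemesis: items of three size classes slightly above $1/6$, $1/3$, and $1/2$, in multiplicities $5n_{1}k$, $2n_{2}k$, $n_{3}k$, presented in smallest-first blocks. A standard first-fit analysis shows that \FF creates exactly three groups of servers: $n_{1}k$ of load $\approx 5/6$ (five small items each), $n_{2}k$ of load $\approx 2/3$ (two medium items each), and $n_{3}k$ of load $\approx 1/2$ (one large item each). An LP/exchange argument over the plausible \OPT bin-types (e.g.\ $\{1/2{+}\varepsilon,\,1/3{+}\varepsilon\}$ and $\{2(1/3{+}\varepsilon),\,(1/6{+}\varepsilon)\}$, each at load $\approx 5/6$) yields $\OPT_{0}=\bigl(n_{1}+\tfrac{4n_{2}}{5}+\tfrac{3n_{3}}{5}\bigr)k$ under the balance condition $10n_{1}+n_{3}\ge 2n_{2}$. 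Choosing $(n_{1},n_{2},n_{3})=(13,73,16)$ sits exactly at this boundary and produces $\FF_{0}=102k$ with $\OPT_{0}=81k$, giving the bin-packing ratio $34/27$.

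Next, at time~$t$, I would add items of sizes slightly below $1/12$, $1/6$, and $1/4$ in that (smallest-first) order, in multiplicities $2n_{1}k$, $2n_{2}k$, $2n_{3}k$. Two $1/12$-items exactly fit into each group-$(1)$ server's residual of $\approx 1/6$, reducing it to almost zero; the subsequent $1/6$-items thus cannot fit in group-$(1)$ and go two-per-server into group-$(2)$; likewise two $1/4$-items fit into each group-$(3)$ residual. A first-fit induction on the Phase-$2$ arrival order verifies that every one of the $102k$ \FF servers ends up alive on $[0,1+t)$, giving $\FF = 102k(1+t)$. For the extension items, \OPT can pack them asymptotically at load $1$ using the exact pattern $2\cdot(1/4)+2\cdot(1/6)+2\cdot(1/12)=1$ into dedicated servers of duration~$1$, and combined with the Phase-$1$ schedule this gives $\OPT = 81k + o(k)$, yielding $\FF/\OPT \to \frac{102(1+t)}{81} = \frac{34}{27}(1+t)$.

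The main obstacle is the last step, namely proving $\OPT \le 81k + o(k)$ in the combined instance. A naive volume lower bound on $\OPT$ only yields the loose ratio $1+t$, so I must instead use the bin-packing optimum of the Phase-$1$ instance (not merely its volume) as the binding constraint and rule out any clever combined schedule that mixes time-$0$ and time-$t$ items on a single server of duration $1+t$. I would isolate this as a separate lemma, handled by an LP/exchange analysis over the three plausible bin types --- ``time-$0$-only'' (cost $1$, load $\le 5/6$), ``time-$t$-only'' (cost $1$, load $\le 1$), and ``combined'' (cost $1+t$, load $\le 1$) --- using the structural fact that the time-$0$ sizes just above $1/6,\,1/3,\,1/2$ forbid any load $>5/6$ on the time-$0$ portion of any bin, so that a combined server absorbs at most $1/6$ unit of time-$t$ volume in exchange for an extra $t$ of rental. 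This bin-type trade-off analysis is the technical core of the argument.
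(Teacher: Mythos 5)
Your overall plan --- the Garey--Graham--Ullman nemesis at time $0$ followed by a second wave of items at time $t$ that stretches every \FF server to duration $1+t$ --- is the same as the paper's, but your choice of quantities breaks the argument irreparably. By giving each \FF server \emph{two} extension items, each roughly half the server's residual capacity, you drive every \FF server to load $\approx 1$. Concretely, with $(n_1,n_2,n_3)=(13,73,16)$ the time-$0$ items have total size $67.5k$ and the time-$t$ items have total size $26k\cdot\frac{1}{12}+146k\cdot\frac{1}{6}+32k\cdot\frac{1}{4}=34.5k$, for a total of $102k$. Since every job has duration $1$, $OPT \ge util(\sigma) = 102k$ while $\FF = 102k(1+t)$, so your instance can certify a ratio of at most $1+t$, strictly smaller than $\frac{34}{27}(1+t)$. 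Relatedly, the claim $OPT = 81k+o(k)$ is false: the time-$t$ items alone require $34.5k$ servers of duration $1$ (and your pattern $2\cdot\frac{1}{4}+2\cdot\frac{1}{6}+2\cdot\frac{1}{12}=1$ cannot even be applied throughout, since the multiplicities $26k$, $146k$, $32k$ are unequal), so in fact $OPT \ge 115.5k$ and your instance yields only about $0.88\,(1+t)$.

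The repair is to keep the time-$t$ volume small: the paper presents exactly \emph{one} extension item per \FF server, of size $\frac{1}{12}$, $\frac{1}{6}$, $\frac{1}{4}$ for the three groups (half the nominal residual; the $\delta_i$ perturbations in the time-$0$ loads guarantee a second such item does not fit, so each server receives exactly one). With the paper's multiplicities ($2k$, $5k$, $10k$ servers), $\FF = 17k(1+t)$ while the extension items cost $OPT$ only $3.5k$, giving $OPT \le 13.5k+1$ and ratio $\frac{17}{13.5}(1+t)=\frac{34}{27}(1+t)$. Note also that your stated ``main obstacle'' is illusory: for a lower bound on the competitive ratio one needs only an \emph{upper} bound on $OPT$, obtained by exhibiting any feasible schedule; there is nothing to rule out, as a cleverer $OPT$ would only strengthen the bound. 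Finally, $\frac{34}{27}$ is not meant to be the bin-packing ratio of the time-$0$ instance (which should be the full $\frac{17}{10}$); it emerges because the denominator $10k$ is inflated by the unavoidable $3.5k$ cost of packing the extension items.
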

\vspace{-1em}
\begin{proof} Fix a large positive integer $k$ and take $\delta$ such that $0 < \delta \ll 18^{-k}$. For $i \in [k]$ we define $\delta_i = \delta \times 18^{k - i}$. The adversarial sequence $\sigma$ of input items is a concatenation of two sequences. The jobs in the first sequence start at time zero, while the jobs in the second sequence start at time $t$. Furthermore, the first sequence is subdivided into three phases. The first phase consists of $k$ groups, where each one consists of $10$ jobs. Likewise, the second phase consists of $k$ groups, where each group consists of $10$ jobs. The third phase consists of $10k$ individual large jobs. Next, we describe the entire input with more details. 
	 
We begin by describing the first sequence.
In the first phase, group $i\in [k]$ consists of items $\sigma_{1,i}=(x_{1,i},s_{1,i}), \sigma_{2,i}=(x_{2,i},s_{2,i}), \ldots, \sigma_{10,i}=(x_{10,i},s_{10,i})$ where $s_{1,i}=s_{2,i}=\cdots=s_{10,i}=0$, and the sizes $x_{j,i}$ are defined as follows:
	
	\[x_{1,i} = \dfrac{1}{6} + 33\delta_i,
	x_{2,i} = \dfrac{1}{6} - 3\delta_i,
	x_{3,i} =  x_{4,i} = \dfrac{1}{6} - 7\delta_i,
	x_{5,i} = \dfrac{1}{6} - 13\delta_i,
	x_{6,i} = \dfrac{1}{6} + 9\delta_i,\]
	\[x_{7,i} = x_{8,i} = x_{9,i} = x_{10,i} = \dfrac{1}{6} - 2\delta_i.\]


	\FF puts $\sigma_{1,i}, \ldots, \sigma_{5,i}$ into server $2i-1$, which results in the server having the load $\dfrac{5}{6} + 3\delta_i $. None of the jobs $\sigma_{6,i}, \ldots, \sigma_{10,i}$ fit into this server. So, \FF assigns these jobs to server $2i$ with resulting load of $\dfrac{5}{6} + \delta_i$.
	Note that, none of the jobs in group $i$ will fit in any of the previous servers from group $i-1$. The smallest load of a server corresponding to group $i-1$ jobs, which is server $2i-2$, is $\dfrac{5}{6} + \delta_{i-1} = \dfrac{5}{6} + 18 \delta_i$, while the smallest job in group $i$ is $\sigma_{5,i}$. Therefore, $\sigma_{5,i}$ cannot fit into server $2i-2$. Thus, \FF has to open $2$ servers for each group $i$, where in total $2k$ servers are used for all of the jobs in the first phase.
	
	The second phase has $k$ groups, where each one consists of $10$ jobs $\sigma'_{1,i}=(x'_{1,i},s'_{1,i}), \sigma'_{2,i}=(x'_{2,i},s'_{2,i}), \ldots, \sigma'_{10,i}=(x'_{10,i},s'_{10,i})$. Each job has start time $0$ and their sizes are given in order as follows:
	
	\[x^{'}_{1,i} = \dfrac{1}{3} + 46\delta_i,
	x^{'}_{2,i} = \dfrac{1}{3} - 34\delta_i,
	x^{'}_{3,i} = x^{'}_{4,i} = \dfrac{1}{3} + 6\delta_i,
	x^{'}_{5,i} = \dfrac{1}{3} + 12\delta_i ,
	x^{'}_{6,i} = \dfrac{1}{3} - 10\delta_i,\]
	\[x^{'}_{7,i} = x^{'}_{8,i} = x^{'}_{9,i} = x^{'}_{10,i} = \dfrac{1}{3} +\delta_i.\]
		According to \FF, jobs $\sigma^{'}_{1,i}$ and $\sigma^{'}_{2,i}$ are assigned to one server with the total size of $\dfrac{2}{3} + 12\delta_i$; and, jobs $\sigma^{'}_{3,i}$ and $\sigma^{'}_{4,i}$ are packed into one server with total load of $\dfrac{2}{3} + 12\delta_i$. It is clear that, the smallest job from the remaining items in this group cannot fit in any of the open servers. Thus, \FF has to place each pair $\sigma^{'}_{5,i}$ and $\sigma^{'}_{6,i}$; $\sigma^{'}_{7,i}$ and $\sigma^{'}_{8,i}$; $\sigma^{'}_{9,i}$ and $\sigma^{'}_{10,i}$ into different servers, each with a total load of $\dfrac{2}{3} + 2\delta_i$. Note that the smallest job in group $i$, which is $\sigma^{'}_{2,i}$, cannot fit into any server from group $i - 1$, as the load of all the servers is at least $\dfrac{2}{3} + 2\delta_{i - 1} = \dfrac{2}{3} + 36\delta_i$.
	In other words, \FF has to open $5$ servers for assigning jobs in each group $i$, employing $5k$ servers for all the jobs in the second phase.
	
	In the third phase, $10k$ jobs of size $\dfrac{1}{2} + \delta$ each are presented having start time $0$. Since none of these jobs fit into the previously opened servers, \FF has to assign these jobs to new servers, resulting in opening $10 k$ servers in this phase. This finishes the description of the first sequence.
	
	In the second sequence, all jobs have start time $t$, and their sizes are as follows. The first $2k$ jobs are of size $\dfrac{1}{12}$ each, which are followed by $5k$ jobs of size $\dfrac{1}{6}$ each, followed by $10k$ jobs of size $\dfrac{1}{4}$ each. \FF assigns the first $2k$ jobs to the servers that were opened in the first phase (one job per server), the next $5k$ jobs to the servers that were opened in phase $2$ (one job per server), and the last $10k$ jobs to the servers that were opened in phase $3$ (one job per server). 
	
	In general, \FF uses $2k$ servers in the first phase, $5k$ bins in the second phase  and $10k$ servers in the third phase. 
	The second sequence has the effect of extending the duration of each server (the period during which the server is active) to $1+t$. Therefore, the cost of \FF is $17k (1+t)$. 

	The optimal offline algorithm, $OPT$, uses $10k$ servers for all the jobs from phase $3$ (i.e., $10k$ jobs of size $\dfrac{1}{2} + \delta$). The remaining space in these bins will be filled with the following combinations of items from phases $1$ and $2$:
	\begin{itemize}
		\item  for all jobs $j$; $j \geq 3$, in group $i$: $OPT$ combines $\sigma_{j, i}$ with $\sigma^{'}_{j, i}$.
		\item since $\sigma_{1, i}$ cannot be combined with $\sigma^{'}_{1,i}$, $OPT$ combines: 
		\begin{enumerate}
			\item $\sigma_{1, i}$ with $\sigma^{'}_{2, i}$ and, 
			\item $\sigma_{2, i}$ with $\sigma^{'}_{1, i+1}$.
		\end{enumerate}
	\end{itemize} 
	According to this packing scheme, two jobs $\sigma_{2, k}$ and $\sigma^{'}_{1, 1}$ remain unpacked. Therefore, $OPT$ uses one extra server for packing these two jobs. Thus, $OPT$ uses $10k + 1$ servers for duration $1$ to pack all jobs in the first $3$ phases, i.e., the first sequence. For the jobs in the second sequence, $OPT$ uses $\dfrac{k}{6} + \dfrac{5k}{6} + \dfrac{10k}{4}$ servers for jobs that are released at time $t$. In total, $OPT$ uses $\dfrac{162 k}{12} + 1 = \frac{27}{2} k + 1 $ servers, each of duration $1$. 
	
	As a result, the competitive ratio of \FF for \RSiC is at least $\frac{17k(1+t)}{27k/2 + 1} \rightarrow \frac{34}{27}(1+t)$ as $k \rightarrow \infty$.
\end{proof}
Substituting $t \rightarrow 1$ and $t=1/2$ in  Theorem~\ref{L.B.Theorem}, we obtain the following corollaries:
\begin{corollary}
	The competitive ratio of \FF for the case of equal duration and arbitrary arrival times is at least $2.\overline{518}$.
\end{corollary}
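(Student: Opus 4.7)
The plan is straightforward: this corollary is simply the optimization of the parameter $t$ in Theorem~\ref{L.B.Theorem}. Since that theorem constructs instances that use only two arrival times ($0$ and $t$), which is a special case of the ``arbitrary arrival times'' setting, any lower bound of the form $\frac{34}{27}(1+t)$ with $t \in (0,1)$ immediately transfers to the general case. My first step would therefore be to observe that $\frac{34}{27}(1+t)$ is strictly increasing in $t$, so to obtain the strongest lower bound I should push $t$ as close to $1$ as possible.

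Next I would compute the limit: $\lim_{t \to 1^-} \frac{34}{27}(1+t) = \frac{68}{27}$. A short arithmetic check shows $\frac{68}{27} = 2 + \frac{14}{27} = 2.518518\ldots = 2.\overline{518}$, which matches the claimed numerical value. For any desired $\varepsilon > 0$, picking $t = 1-\varepsilon$ in Theorem~\ref{L.B.Theorem} yields an asymptotic lower bound of $\frac{34}{27}(2-\varepsilon)$ on the competitive ratio of \FF on a sequence with arrival times $0$ and $1-\varepsilon$, and hence also on the supremum taken over the (larger) class of instances with arbitrary arrival times. Letting $\varepsilon \to 0$ gives the stated lower bound of $\frac{68}{27} = 2.\overline{518}$.

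The only conceptual point worth flagging is why one cannot simply set $t=1$: at $t=1$ the ``extension'' jobs from the second sequence arrive exactly when the phase-1/2/3 jobs finish, and because activity intervals are half-open $[s_i,f_i)$ (as fixed in Section~\ref{Section:Preliminaries}), none of the original \FF servers are still alive to absorb these extension jobs. Consequently the adversarial construction of Theorem~\ref{L.B.Theorem} really does require $t<1$, and the $\frac{68}{27}$ bound is a supremum rather than a maximum. This is the only obstacle, and it is minor: the asymptotic lower bound of $2.\overline{518}$ follows without any additional argument beyond this limiting observation.
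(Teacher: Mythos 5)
Your proposal is correct and follows exactly the paper's argument: the paper's proof is simply ``Take $t \rightarrow 1$ in Theorem~\ref{L.B.Theorem},'' and your computation $\lim_{t\to 1^-}\frac{34}{27}(1+t)=\frac{68}{27}=2.\overline{518}$ together with the observation that two arrival times are a special case of arbitrary arrival times is precisely what is intended. Your added remark on why $t=1$ itself is excluded (the half-open activity intervals make the extension jobs miss the original servers) is a correct and worthwhile clarification, though the paper does not spell it out.
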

\begin{corollary}
	The competitive ratio of \FF for the case of all jobs having duration $2$ and arrival times $0$ and $1$ is at least $1.\overline{8}$.
\end{corollary}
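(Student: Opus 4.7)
The plan is to reduce the duration-$2$ case to the duration-$1$ case by a simple time-rescaling, and then invoke Theorem~\ref{L.B.Theorem} with the parameter choice $t = 1/2$. Concretely, given any input $\sigma$ with duration $2$ and arrival times in $\{0,1\}$, I would scale the time axis by a factor of $1/2$. This yields an equivalent input $\sigma'$ in which every job has duration $1$ and arrival times are in $\{0, 1/2\}$. Since scaling time multiplies every rental interval (for both \FF and $OPT$) by the same factor, the ratio $\FF(\sigma)/OPT(\sigma) = \FF(\sigma')/OPT(\sigma')$; moreover, the two algorithms make identical assignment decisions on $\sigma$ and $\sigma'$, since the relative ordering of arrivals and the size constraints are unchanged. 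The preliminaries section of the paper explicitly notes this equivalence between the duration-$2$/arrivals-$\{0,1\}$ setting and the duration-$1$/arrivals-$\{0,1/2\}$ setting.

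Having established the equivalence, I would then apply Theorem~\ref{L.B.Theorem} to the rescaled instance with $t = 1/2$, obtaining an asymptotic lower bound of
\[
\frac{34}{27}\left(1 + \tfrac{1}{2}\right) = \frac{34}{27}\cdot \frac{3}{2} = \frac{17}{9} = 1.\overline{8}.
\]
Because the lower bound instance constructed in the proof of Theorem~\ref{L.B.Theorem} uses exactly two arrival times (namely $0$ and $t$), the rescaled instance stays within the allowed family (arrival times $\{0,1\}$, duration $2$), so no additional modification of the adversarial construction is required.

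There is essentially no obstacle to overcome here: the only thing to be careful about is verifying that rescaling preserves both the \FF schedule and the optimal schedule, and that the ratio in the statement of Theorem~\ref{L.B.Theorem} is asymptotic so that a clean ``$\to$'' limit transfers through the equivalence. The arithmetic $\frac{34}{27}\cdot \frac{3}{2} = \frac{17}{9}$ is the entire quantitative content of the corollary.
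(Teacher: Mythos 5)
Your proposal is correct and matches the paper's own proof exactly: the paper also sets $t = 1/2$ in Theorem~\ref{L.B.Theorem} and invokes the scaling equivalence between duration $1$ with arrivals $\{0,1/2\}$ and duration $2$ with arrivals $\{0,1\}$, giving $\frac{34}{27}\cdot\frac{3}{2} = \frac{17}{9} = 1.\overline{8}$. Your write-up merely spells out the rescaling argument in more detail than the paper does.
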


\subsection{Upper Bound for Jobs with Two Arrival Times}\label{sec:general_2_arrivals}
In this section, we prove an upper bound for the competitive ratio of \FF for \RSiC when all jobs have duration of precisely one and arrival times of $0$ and $t $. Our result improves upon the previous best known upper bound of $\mu+3=4$. Our proof  extends the weight function technique that was previously successfully applied to the bin packing problem to the \RSiC problem. To the best of our knowledge, this is the first time the weight function technique has been used in the analysis of \RSiC. 

For the case of jobs of equal duration $1$ and arrival times $0$ and $t$, each \FF server falls in one of the following three categories:

\textbf{Category C:} starts at $0$ and ends at $1$,

\textbf{Category D:} starts at $0$ and ends at $1+t$,

\textbf{Category E:} starts at $t$ and ends at $1+t$.

In other words, servers in the category $C$ have some items that arrived at time $0$ and no items that arrived at time $t$. Servers in category $D$ have some items that arrived at time $0$ and some items that arrived at time $t$. Servers in category $E$ have some items that arrived at time $t$ and no items that arrived at time $0$.

 Let $C_1, C_2, \ldots, C_{k_1}$ be all the servers in category $C$ listed in the order of their opening times. Let $D_1, D_2, \ldots, D_{k_2}$ be all servers in category $D$ listed in the order of their opening times. Lastly, let $E_1, E_2, \ldots, E_{k_3}$ be all servers in category $E$ listed in the order of their opening times.


 Although we have ordered servers according to their opening times within each category, we will sometimes need to see how servers of different categories interact. Thus, we will make use of the following observations:

\begin{enumerate}
	\item $C_i$ was opened before $E_j$ for every $C_i$ and $E_j$ 
	\item $D_i$ was opened before $E_j$ For every $D_i$ and $E_j$ 
\end{enumerate}

Thus, if we were to order $C_i, D_j, E_k$ according to their opening times, then all the $C_i$ would appear before any of the $E_k$ and all the $D_j$ would appear before any of the $E_k$; however, $C_i$ and $D_j$ could be interleaved.

 Recall that for a server $B$ opened by \FF, $x(B, 0)$ denotes the sum of all sizes of jobs assigned to $B$ that were active at time $0$. Since the only jobs active at time $0$ are those with arrival time $0$, we equivalently can say that $x(B,0)$ is the sum of sizes of jobs assigned to $B$ with arrival time $0$. Similarly, note that at time $1$ the only jobs that are active in $B$ are those that arrived at time $t > 0$ and were assigned to $B$. Thus, $x(B, 1)$ denotes the sum of all sizes of jobs assigned to $B$ with arrival time $t$. The sum of all the sizes of jobs assigned to $B$ is called the total load on the server $B$ and is denoted by $X(B)$, that is, $X(B) = x(B,0) + x(B,1)$. We now prove lower bounds on loads of different categories of servers.

\begin{lemma}\label{lemma:category_C&E}
    For servers of category $C$ or $E$ used by \FF, at most one server can have a total load less than $1/2$, within each category. Moreover, if there are servers from both categories $C$ and $E$ used by \FF, then at most one server across both categories can have a load less than $1/2$ concurrently.
\end{lemma}
\begin{proof}
Suppose, by contradiction, that we have two servers $B_i$ and $B_j$ with $i<j$ of category $C$, that have load less than $1/2$. This implies that $x(B_i, 0) + x(B_j, 0) \le 1$. Therefore, \FF should not have opened server $B_j$ because its items could have been accommodated by server $B_i$, contradicting the assumption.
The same reasoning applies if both $B_i$ and $B_j$ belong to category $E$. Suppose we have a server $B_i$ of category $C$ and a server $B_j$ of category $E$, each with a load less than $1/2$. According to \FF, items in server $B_j$ should have been assigned to server $B_i$, since $x(B_i, 0) + x(B_j, t) \leq 1$. Therefore, \FF should not have opened server $B_j$, leading to a contradiction.
    
\end{proof}

\begin{lemma}\label{lemma:category_D1}
	For servers of category $D$ used by \FF, no more than two servers can have a total load less than $3/4$.
\end{lemma}

\begin{proof}
	Suppose, by contradiction, that we have three bins $B_1, B_2, B_3$ such that $X(B_i) < 3/4$ for $i = 1, 2, 3$. Without loss of generality, assume that the three bins were opened by \FF in the order $B_1, B_2, B_3$. Due to the definition of \FF, it follows that each item in $B_2$ and $B_3$ has size at least $1/4$ (otherwise, this item would have been placed into $B_1$). In particular, $x(B_2,1), x(B_3,1) > 1/4$. It follows that $x(B_2,0), x(B_3,0) < 3/4-1/4 = 1/2$. This is a contradiction since $x(B_3, 0)$ 
	should have been placed into bin $B_2$ by the \FF.
\end{proof}

In light of the previous lemma, by ignoring at most $2$ servers, we can assume that every server of category $D$ in \FF has total load at least $3/4$. Inspired by the lower bound construction from Section~\ref{sec:ff_lb}, we note that there are certain thresholds for $x(B,0)$ and $X(B)$ that are important for the proof. More specifically, $x(B, 0)$ thresholds are $5/6, 2/3, 1/2$; whereas $X(B)$ thresholds are $11/12, 5/6, 3/4$. For the sake of the analysis, we do not have to consider all pairs of thresholds. This follows from the following lemma:

\begin{lemma}\label{lemma:category_D2}
	Let $\alpha \in (0,1)$ and let $B_1, B_2, \ldots, B_k$ be the servers opened by \FF (in that order) that satisfy $x(B_i, 0) \ge \alpha$. Then at most one server $B_i$ satisfies $X(B_i) < (1+\alpha)/2$.
\end{lemma}
\begin{proof}
	Consider two servers $B_i, B_j$ with $i < j$ and assume, for contradiction, that $X(B_i), X(B_j) < (1+\alpha)/2$. This implies that $x(B_j, 1) = X(B_j)- x(B_j, 0) < (1+\alpha)/2 - \alpha = (1-\alpha)/2$, so the $x(B_j, 1)$ items can fit into $B_i$, since $X(B_i) + x(B_j, 1) < (1+\alpha)/2+(1-\alpha)/2=1$. By the \FF rule these items should not have been placed in $B_j$.
\end{proof}

Thus, by ignoring at most three more servers, we may assume that all servers with $x(B, 0) \ge 5/6$ have $X(B) \ge 11/12$, all servers with $x(B, 0) \ge 2/3$ have $X(B) \ge 5/6$, and all servers with $x(B, 0) \ge 1/2$ have $X(B) \ge 3/4$.

\begin{lemma}\label{lemma:2items_for_23}
For all but at most one server $B$ with $1/2 < x(B,0) \le 2/3$, it holds that $B$ contains a single job at time $0$.
\end{lemma}
\begin{proof}
    Suppose for contradiction that there are two servers $B_1$ and $B_2$ containing at least two jobs at time $0$ and satisfying $1/2 < x(B_i,0) \le 2/3$. Without loss of generality, assume that $B_2$ was opened later. Since $x(B_2,0) \le 2/3$, one of the jobs assigned to $B_2$ at time $0$ must have size at most $1/3$. Then this job should have been placed in $B_1$ by \FF.
\end{proof}

For the purpose of analysis we divide  servers in category $D$ into three types as shown in \autoref{tab:server-types}.
Observe that by the previous discussion, a Type I server also satisfies \textbf{$X(B) \ge 11/12$}. Type II server is further subdivided into two subtypes: Type II(a) satisfies $X(B) \ge 11/12$, Type II(b) satisfies $5/6 \le X(B) < 11/12$. 
Similarly Type III server is further subdivided into several subtypes: Type III(a) satisfies $X(B) \ge 11/12$, Type III(b) satisfies $5/6 \le X(B) < 11/12$, and Type III(c) satisfies $3/4 \le X(B) < 5/6$. By Lemma~\ref{lemma:2items_for_23} each Type III server contains exactly one item at time $0$. 

\begin{table}[!htb]
\caption{Different Types of  Servers in Category $D$}
\label{tab:server-types}
\centering
 \begin{tabular}{@{}l l ll c@{}} 
 \toprule
 Type & Subtype & Bounds on $x(B, 0)$ & Bounds on $X(B)$ & Num of items at time $0$\\ [0.5ex] 
 \midrule
 Type I & & $5/6 \le x(B, 0) \le 1$ & $11/12 \le X(B) \le 1$ & $\ge 1$\\ \addlinespace
 \hline \addlinespace
 Type II & &$2/3 \le x(B, 0) < 5/6$ & & \\
 & (a) &  & \texttt{$11/12 \le X(B) \le 1$} & $\ge 1$\\
 & (b) &  & $5/6 \le X(B) < 11/12$ & $\ge 1$\\ \addlinespace
 \hline \addlinespace
 Type III & & $1/2 < x(B, 0) < 2/3$ & &\\
  & (a) &  & $11/12 \le X(B) \le 1$ & $= 1$\\
 & (b) &  & $5/6 \le X(B) < 11/12$ & $=1$\\
 & (c) &  & $3/4 \le X(B) < 5/6$ & $=1$\\
 \bottomrule
\end{tabular}
\end{table}



\small
We make some further observations. Consider a server of Type II (b). We let $\epsilon_1, \epsilon_2 > 0$ be such that $x(B, 0) =5/6-\epsilon_1$ and $X(B) = 11/12-\epsilon_2$. It follows that $x(B, 1) = 1/12+\epsilon_1-\epsilon_2$. Also, since each job arriving at time $t$ (except possibly for the first server of each type) has to have size greater than $1/12$, it follows that $X(B) = x(B, 0) + x(B, 1) > 5/6-\epsilon_1 + 1/12 = 11/12 - \epsilon_1$. It follows that $\epsilon_1 > \epsilon_2$. In addition, we observe that $\epsilon_1 \le 1/6$ and $\epsilon_2 \le 1/12$.

Similar observations hold for a server of Type III (c). More specifically, we have $x(B, 0) = 2/3-\epsilon_1$ and $x(B, 1) = 1/6 + \epsilon_1 -\epsilon_2$. The choices of $\epsilon_1$ and $\epsilon_2$ depend on the server, of course; however, we always have $\epsilon_1 > \epsilon_2$ and $\epsilon_1 \le 1/6$ while $\epsilon_2 \le 1/12$.

With these notations, we are ready to establish an upper bound using the weighting technique. Define the following weight functions: $w_1$ for the items arriving at time $0$ and $w_2$ for the items arriving at time $t$:
    \[ w_1(x) = \frac{156+156t}{131} x + \left\{\begin{array}{ll} 0 & \text{ if } x \le 1/2 \\ \frac{12+12t}{131} &\text{otherwise} \end{array} \right. \hspace{1cm}\text{ and } \hspace{1cm} w_2(x) = \frac{168}{131} (1+t) x.\]
    By employing these two weight functions on the items packed into servers of categories $C, D$ and $E$ used by \FF, we can prove the following:


\begin{lemma} \label{theorem:bounds_for_servers}   
   On any input $\sigma$ with items arriving at two arrival times 0 and $t$, we have the following:
    \begin{itemize}
        \item[(i)] If $t \ge \frac{41}{90} \approx 0.456$ then $w_1(C_i) \ge 1 = d(C_i)$ for all but constantly many servers $C_i$ in category $C$.

        \item[(ii)] For all but constantly many servers $D_i$ in category $D$ it holds that $w_1(D_i)+ w_2(D_i) \ge 1 + t = d(D_i)$. 
        
        \item[(iii)] If $t \ge \frac{47}{84} \approx 0.560$ then $w_2(E_i) \ge 1 = d(E_i)$ for all but constantly many servers $E_i$ in category $E$. 

        \item[(iv)] If $t \ge \frac{1}{28} \approx 0.035 $ then $w_1(S) + w_2(S) \le \frac{168}{131}(1+t) d(S)$ for every server $S$ of $OPT$.
    \end{itemize}
\end{lemma}

\begin{proof}
\begin{itemize}
    \item[(i)] By Lemma~\ref{lemma:category_C&E} we may assume that  $x(C_i, 0) > 1/2$. If $x(C_i, 0) \le 2/3$ then by Lemma~\ref{lemma:2items_for_23}, we may assume that $C_i$ contains a single item. Therefore, we have:
    \[ w_1(C_i) \ge \frac{156+156 t}{131} x(C_i,0) + \frac{12+12t}{131} \ge \frac{156+156 (41/90)}{131} \cdot \frac{1}{2} + \frac{12+12(41/90)}{131} = 1,\]
where we used $t \ge (41/90)$ in the second inequality.

If $x(C_i, 0) > 2/3$, then we have:
\[w_1(C_i) \ge \frac{156+156 t}{131} x(C_i,0) \ge \frac{156+156 (41/90)}{131} \cdot \frac{2}{3} > 1.\]

    \item[(ii)] 	Let $D_i$ be a server in category $D$ that is used by \FF such that $X(D_i) \ge 11/12$. Then we have:
	\[ w_1(D_i) + w_2(D_i) \ge \frac{156+156 t}{131} X(D_i) \ge \frac{156}{131}(1+t) \frac{11}{12} > 1+t.\]

	Thus, the servers of Type I, Type II (a) and Type III (a) are handled in the sense that the weight function applied to each server of that type gives a total weight greater than $1+t$.
	
	We handle servers of Type II (b) next. Let $D_i$ be such a server and use the notation of $\epsilon_1$ and $\epsilon_2$ introduced prior to the theorem statement. In this case, we have
	\begin{align*}
	w_1(D_i) + w_2 (D_i)&=\frac{156}{131} (1+t) \left( \frac{5}{6} - \epsilon_1\right) + \frac{168}{131} (1+t) \left( \frac{1}{12} + \epsilon_1 - \epsilon_2\right) \\
	&= (1+t) \left( \frac{156}{131} \cdot \frac{5}{6} + \frac{168}{131} \cdot \frac{1}{12} + \frac{12}{131} \epsilon_1-\frac{168}{131} \epsilon_2  \right)\\
	&= (1+t) \left( \frac{130}{131}  + \frac{14}{131}  + \frac{12}{131} \epsilon_1-\frac{168}{131} \epsilon_2  \right) \\
	&>(1+t) \left( \frac{130}{131}  + \frac{14}{131}  + \frac{12}{131} \epsilon_2-\frac{168}{131} \epsilon_2  \right)\\
	&= (1+t) \left( \frac{130}{131}  + \frac{14}{131}  -\frac{156}{131} \epsilon_2  \right) \\
	&\ge (1+t) \left( \frac{130}{131}  + \frac{14}{131}  -\frac{13}{131} \right) = 1+t,
	\end{align*} 
	
	where the first inequality follows from $\epsilon_1 > \epsilon_2$ and the second from $\epsilon_2 \le 1/12$.
	

    Next, we handle Type III (b) servers. Let $D_i$ be such a server, and define $\epsilon$ to mean $x(D_i,0) = 1/2 + \epsilon$. Since $X(D_i) \ge 5/6$, it follows that $x(D_i, 1) \ge 5/6-(1/2+\epsilon) = 1/3 - \epsilon$. Also observe that $\epsilon < 1/6$. Plugging these estimates into the weight function we obtain:
    \begin{align*}
    w_1(D_i) + w_2(D_i) &= \frac{156}{131} (1+t) \left( \frac{1}{2} - \epsilon\right) + \frac{12}{131} (1+t) + \frac{168}{131} (1+t) \left( \frac{1}{3} - \epsilon\right)\\
    &= (1+t) \left( \frac{146}{131} - \frac{12}{131} \epsilon\right) > (1+t) \left( \frac{146}{131} - \frac{2}{131} \right) \ge 1+t,\\
    \end{align*}
    where we have used $\epsilon < 1/6$ in the penultimate step.
	
	We are only left to check the servers of Type III (c). Let $D_i$ be such a server and redefine $\epsilon_1$ and $\epsilon_2$ for this server. Then we have
	\vspace*{-0.1in}
	\begin{align*}
	w_1(D_i) + w_2 (D_i) &= \frac{156}{131} (1+t) \left( \frac{2}{3} - \epsilon_1\right) + \frac{12}{131} (1+t) + \frac{168}{131} (1+t) \left( \frac{1}{6} + \epsilon_1 - \epsilon_2\right) \\
	&= (1+t) \left( \frac{144}{131} + \frac{12}{131} \epsilon_1 - \frac{168}{131} \epsilon_2 \right) \ge (1+t),
	\end{align*} 
		\vspace*{-0.1in}
	
	Thus, we have proved that for all, but constantly $c$ many, servers of type $2$ used by \FF it holds that $w_1(D_i) + w_2(D_i) \ge 1+t$, considering $t > 0.3571$.

    \item[(iii)] Let $E_i$ be a server that starts at time $t$ and finishes at time $t+1$ which is of category $E$ of servers. By Lemma~\ref{lemma:category_C&E} we can assume this server has total load $> 1/2$. Therefore, when we apply the weight function $w_2$, we get: 
   \[ w_2(E_i) \ge \frac{168}{131} (1+t) \cdot \dfrac{1}{2} \ge  \frac{168(1+(47/84))}{262} = 1.\]

    \item[(iv)] Let $S$ be an arbitrary server of $OPT$. If $S$ contains only jobs that arrived at time $0$ then $w_1(S) \le \frac{156}{131}(1+t) + \frac{12}{131} (1+t) = \frac{168}{131}(1+t) = \frac{168}{131}(1+t) d(S)$, since $d(S) = 1$ in this case. Similarly, if $S$ contains only jobs that arrived at time $t$, then $w_2(S) \le \frac{168}{131}(1+t) = \frac{168}{131}(1+t) d(S)$, since again $d(S) = 1$. If $S$ contains both jobs arriving at time $0$ and time $t$ then the weight function is maximized when there is one item at time $0$ that barely exceeds $1/2$ (so that we collect the bonus for $w_1$) and the remaining items arrive at time $t$ and add up to barely $1/2$ (since $w_2$ point-wise is at least $w_1$). Thus, we have:
\begin{align*}
w_1(S) + w_2(S) &\le \frac{1}{2}\cdot  \frac{156}{131}(1+t)+ \frac{12}{131}(1+t) + \frac{1}{2}\cdot  \frac{168}{131} (1+t) \\
&= \frac{174}{131} (1+t) \le \frac{168}{131} (1+t) \cdot (1+t) \\
&= \frac{168}{131} (1+t) d(S),
\end{align*}
where the last inequality follows from $t \ge \frac{1}{28}$, and the last equality follows from $d(S) = 1+t$ in this case. Thus, in all cases, we show that $w_1(S) + w_2(S) \le \frac{168}{131} (1+t) d(S)$, where $d(S)$ denotes the duration of the server $S$.
\end{itemize}

\end{proof}

\begin{theorem}
 On input $\sigma$ where each job has duration $1$ and arrival time $0$ and $t > \frac{47}{84} \approx 0.560$, \FF achieves competitive ratio at most $\dfrac{168}{131}(1+t)$.
\end{theorem}\label{theorem:general_servers}

\begin{proof}

Let $x_1, \ldots, x_n$ denote the sizes of all jobs arriving at time $0$ and $x'_1, \ldots, x'_m$ denote sizes of all jobs arriving at time $t$. 
Let $B_1, \ldots, B_k$ denote \FF servers and $S_1, \ldots, S_p$ denote servers of $OPT$. Since $t > \frac{47}{84}$, parts (i)-(iii) of Lemma~\ref{theorem:bounds_for_servers} imply that $w_1(B_i) + w_2(B_i) \ge d(B_i)$ for all but some constant number $c$ of servers $B_i$. Similarly, part (iv) of Lemma~\ref{theorem:bounds_for_servers} implies that $w_1(S_i) + w_2(S_i) \le \frac{168}{131} (1+t)  d(S_i)$. Combining everything together we have:
\begin{align*} 
\FF-(1+t) c &\leq  \sum_{i=1}^k d(B_i) \le \sum_{i=1}^k (w_1(B_i) + w_2(B_i)) = \sum_{i=1}^n w_1(x_i) + \sum_{i=1}^m w_2(x'_i) \\&= \sum_{i=1}^p (w_1(S_i) + w_2(S_i))
\le \frac{168}{131} (1+t) \sum_{i=1}^p d(S_i) = \frac{168}{131} (1+t) OPT.
\end{align*} 
\end{proof}

By taking $t \rightarrow 1$, we obtain the following corollary:

\begin{corollary}
The asymptotic competitive ratio of \FF for the case of jobs of equal duration, two arbitrary arrival times $0$ and $t \in (\frac{47}{84} , 1]$ is at most $2.565$.
\end{corollary}

Examining various conditions on the range of $t$ in Lemma~\ref{theorem:bounds_for_servers}, we extend the range of $t$ where Theorem~\ref{theorem:general_servers} holds for inputs where \FF has specified behaviors.

\begin{corollary}\label{thm:ff_asymp_ub}
On inputs $\sigma$ where each job has duration $1$, arrival time $0$ and $t \in (\frac{41}{90},1)$, and no new servers are opened by \FF at time $t$, \FF achieves competitive ratio at most $\frac{168}{131} (1+t).$  
\end{corollary}

\begin{corollary}\label{thm:ff_asymp_ub}
On inputs $\sigma$ where each job has duration $1$, arrival time $0$ and $t \in [\frac{1}{28} . 1)$, and all \FF servers have  duration $1+t$, \FF achieves competitive ratio at most $\frac{168}{131} (1+t).$  
\end{corollary}

\subsection{Strict Upper Bound: Equal Duration $2$ and Arrival Times $0$ and $1$}\label{Section:Strict}
In this section, 
we prove an upper bound of $2$ on the  {\em strict} competitive ratio of \FF for \RSiC when all jobs have duration of precisely two and arrival times of $0$ and $1$. Clearly this is equivalent to the situation when all jobs have duration one, and arrival times of $0$ or $1/2$. The main result of this section is stated in Theorem~\ref{thm:2ub}. The proof is done by a careful case analysis. Since there are two arrival times, we have a lot more cases to deal with than in the  bin packing problem. We begin by setting up some notation that will be common to all lemmas in this section.

For the case of jobs of equal duration $2$ and arrival times $0$ and $1$, we have the same categories of servers as mentioned in Section~\ref{sec:general_2_arrivals}.
Recall that $x(B, j)$ denotes the total size of all items that are \emph{active} at time $j$ and were assigned to server $B$. If we wish to refer to items that arrived to server $B$ at time $1$, we can use the notation $x(B,2)$ since the only jobs active at time $2$ in $B$ are those that arrived at time $1$ (jobs that arrived at time $0$ are active during the half-open interval $[0,2)$).

We begin with a number of observations concerning the relationships between pairs of servers of the same type. 

\begin{lemma}
The following inequalities hold:
	\begin{align}
 x(C_i, 0) + x(C_{i+1}, 0) &> 1 \hspace{0.2cm} \mbox{ for }\hspace{0.2cm} i \in [k_1-1]\label{in:ci-chain} \\ 
 x(D_i, 0) + x(D_{i+1}, 0) &> 1 \hspace{0.2cm} \mbox{ for }\hspace{0.2cm} i \in [k_2-1]\label{in:di-chain}\\  	x(E_i, 2) + x(E_{i+1}, 2) &> 1 \hspace{0.2cm} \mbox{ for }\hspace{0.2cm} i \in [k_3-1]\label{in:ei-chain}   \\ 
 x(C_1, 0) + x(C_{k_1}, 0) &> 1 \hspace{0.2cm} \mbox{ if }\hspace{0.2cm} k_1 > 1
 \label{in:Cfirst-and-last}\\
  x(D_1, 0) + x(D_{k_2}, 0) &> 1 \hspace{0.2cm} \mbox{ if }\hspace{0.2cm} k_2 > 1 
 \label{in:Dfirst-and-last}\\
  x(E_1, 0) + x(E_{k_3}, 0) &> 1 \hspace{0.2cm}  \mbox{ if } \hspace{0.2cm} k_3 > 1 
 \label{in:Efirst-and-last} \\
x(D_i, 0) + x(D_i, 2) + x(D_{i+1}, 2) &> 1 \hspace{0.2cm} \mbox{ for } \hspace{0.2cm} i \in [k_2-1] \label{in:di-chain2} 
 \end{align}
 \end{lemma}
\begin{proof}
All the above inequalities follow directly from the definition of \FF and the ordering of the servers. For example, $C_{i+1}$ was opened by \FF because the first item to be placed in it did not fit in $C_i$, yielding Inequality (\ref{in:ci-chain}).  The same logic gives rise to a chain of inequalities (\ref{in:di-chain}) and (\ref{in:ei-chain}) concerning the $D_i$ and $E_i$ servers respectively, as well as (\ref{in:Cfirst-and-last}), (\ref{in:Dfirst-and-last}), and (\ref{in:Efirst-and-last}).
Note that the $x(D_{i+1}, 2)$ items didn't fit into $D_i$, which contained at most $x(D_i,0) + x(D_i, 2)$, hence the chain of inequalities (\ref{in:di-chain2}). 
\end{proof}

The next lemma concerns relationships between pairs of servers of {\em different} types.

\begin{lemma}
The following inequalities hold:
\begin{align}
 	x(D_{k_2}, 0) + x(D_{k_2}, 2) + x(E_{k_3}, 2) &> 1 
  \label{in:dlast-elast}\\
 	x(D_{k_2}, 0) + x(D_{k_2}, 2) + x(E_{1}, 2) &> 1 
  \label{in:dlast-efirst}\\  
  x(C_1, 0) + x(D_{k_2}, 0) &> 1  \hspace*{0.2in} 
  \label{in:cfirst-dlast}\\
   	x(C_i, 0) + x(E_{j}, 1) &> 1\hspace*{0.2in} \mbox { for  $i \in [k_3], j \in [k_2]$ }\label{in:c-and-e} 
    \end{align}
\end{lemma}

\begin{proof}
As observed earlier, every $E_j$ server was opened at time 1, after {\em all}  $C_i$ and $D_i$ servers were opened, and items were placed into an $E_j$ server because they did not fit into $D_i$ and $C_i$ servers, leading to Inequalities (\ref{in:dlast-efirst}),(\ref{in:dlast-elast}), and (\ref{in:c-and-e}. 
 Inequality (\ref{in:cfirst-dlast}) holds regardless of if $C_1$ was opened before or after $D_{k_2}$. 
\end{proof}

Define $A_1$ to be the sum of all items that arrived at time $0$ and were packed into $C_i$ servers, $A_2$ to be the sum of all items that arrived at time $0$ and were packed into $D_i$ servers, and $B_2$ to be the sum of all items that arrived at time $1$ and were packed into $D_i$ servers, and finally $B_3$ to be the sum of all items that arrived at time $1$ and were packed into $E_i$ servers. More specifically, we have:

\[ A_1 := \sum_{i=1}^{k_1} x(C_i, 0), \;\;\; A_2 := \sum_{i=1}^{k_2} x(D_i, 0), \;\;\; B_2 := \sum_{i=1}^{k_2} x(D_i, 2),\;\;\; B_3 :=\sum_{i=1}^{k_3} x(E_i, 2).\]
We also define $A := A_1 + A_2$ and $B = B_2 + B_3$

Since the duration of each item is 2, the following lower bound on the cost of OPT is immediate:

\begin{lemma} \label{lem:opt-lb}
$OPT \geq \lceil 2A + 2B \rceil$
\end{lemma}

Next we show some upper bounds on the number of different categories of servers, that hold when the number of servers of each category are large enough.

\begin{lemma}
The following inequalities hold:
\begin{eqnarray}
 	2 A_1 &>& k_1 \hspace*{0.2in}  \mbox{ if $k_1 > 1$} \label{eq:k1-ub1} \\
  	2 A_2 &>& k_2 \hspace*{0.2in}\mbox{ if $k_2 > 1$} \label{eq:k2-ub} \\ 
   2 B_3  & > & k_3    \hspace*{0.2in}\mbox{ if $k_3 >  1$} \label{eq:k3-ub} \\
   A_2 + 2B_2 & > & k_2 - 1 + x(D_{k_2}, 0) + x(D_1, 2) + x(D_{k_2}, 2)\hspace*{0.2in}\mbox{ if $k_2 > 1$}  \label{eq:k2-ub2} \\
   2A_1 + 2B_3 & > &  k_1 + k_3\hspace*{0.2in}\mbox{ if $k_1, k_3 \geq  1$} \label{eq:k1-and-k3-ub} \\
       4 A_2 + 4 B_2 & > & 3 k_2 -1 \hspace*{0.2in}
   \mbox{ if $k_2 > 2$}  \label{eq:k2-ub3}       
   \end{eqnarray}
\end{lemma}

\begin{proof}
  Adding up all the inequalities (\ref{in:ci-chain}) and (\ref{in:Cfirst-and-last}) gives us (\ref{eq:k1-ub1}). 
   Adding up all the inequalities (\ref{in:di-chain}) and (\ref{in:Dfirst-and-last}) gives us (\ref{eq:k2-ub});  adding up all the inequalities (\ref{in:ei-chain}) with (\ref{in:Efirst-and-last}) gives us (\ref{eq:k3-ub}), and adding all the inequalities (\ref{in:di-chain2}) gives us (\ref{eq:k2-ub2}).
   
   It follows from (\ref{in:c-and-e}) that $x(C_{k_1},0) + x(E_1, 2) > 1 $ and $x(C_1, 0) + x(E_{k_3}, 2)> 1$. Adding these two inequalities to the two chains of inequalities (\ref{in:ci-chain}) and (\ref{in:ei-chain}), we obtain (\ref{eq:k1-and-k3-ub}). (It can be verified that this is true even if one or both of $k_1, k_2$ equal 1.)



To show Inequality~\ref{eq:k2-ub3}, we consider the following $2$ cases:

	\textit{Case 1:}  $x(D_1,2) + x(D_{k_2},0) + x(D_{k_2},2) \ge 1/2 $: 
 Then we can write Inequality~\ref{eq:k2-ub2} as:
		$ A_2 + 2B_2 > k_2 - 1/2.$ Multiplying this inequality by two and adding to it Inequality~\ref{eq:k2-ub}, we obtain Inequality~\ref{eq:k2-ub3}.
 	
	\textit{Case 2:}  $x(D_1,2) + x(D_{k_2},0) + x(D_{k_2},2) < 1/2 $. Let us denote $x(D_{k_2},0)$ by $\epsilon$. 
		By \FF rules we $x(D_i, 0) > 1-\epsilon$ for $i \in \{1, \ldots, k_2-1\}$.
	
		Adding all these inequalities together with $x(D_{k_2},0)=\epsilon$ we obtain:
		\begin{equation}\label{case-2}
		A_2 > k_2 - 1 - (k_2 - 2)\epsilon. 
		\end{equation}
		From Inequality~\ref{eq:k2-ub2} it follows that:
		
		\begin{equation}\label{Case2}
		A_2 + 2B_2 > k_2 - 1 + \epsilon.
		\end{equation}
		
		If we multiply this inequality by $2$ and sum it up with twice the Inequality~\eqref{case-2} then we have:
		
		\begin{align*}
		4A_2 + 4B_2 &> 4(k_2 -1) - 2(k_2 - 2) \epsilon + 2\epsilon \\ 
		&>4(k_2 -1) - 2(k_2 - 3)\epsilon \\
		&>4(k_2- 1) - (k_2-3) \\
		& =4 k_2 - 4 - k_2 + 3  = 3 k_2 - 1,\label{Exception2}
		\end{align*}
		where the third inequality is due to $\epsilon < 1/2$. Observe that we silently assumed that $k_2 > 3$ in the above calculation. Observe that if $k_2 = 3$ we obtain $4A_2 + 4B > 4(k_2-1) - 2(k_2-3) \epsilon = 4 (k_2-1) = 4k_2-4$. However, for the case $k_2 = 3$ it holds that $4k_2-4 = 3 k_2 - 1$, so we conclude that the inequality $4A_2 + 4 B_2 \ge 3k_2-1$ holds for all $k_2 > 2$. 
\end{proof}

The following lemma relates the cost of $\FF$ to the cost of $\OPT$ and the number of category-$D$ servers used by $\FF$ by using a reduction to bin packing. 

\begin{lemma} \label{lem:ub-BP}
Let $\sigma$ be an input on which \FF uses $k_2$ servers of type $D$. Then
$$\FF(\sigma) \leq 1.7 \OPT(\sigma) + k_2 $$
\end{lemma}
\begin{proof}
Define $\sigma'$ to be an input derived from  $\sigma$ by shifting the arrival times of items arriving at time 1 to time 0, while preserving the original ordering of items. It is easy to see that for any solution for 
 $\sigma$, we can 'slide' the items arriving at time $1$ back to time $0$, thus obtaining a valid solution for $\sigma'$ without increasing the cost. Thus $OPT(\sigma') \le OPT(\sigma)$. Furthermore, we have  $\FF(\sigma') = \FF(\sigma)-k_2$, since all the $k_2$ type-D \FF servers have duration $2$ in the solution for $\sigma'$ instead of $3$ in the solution for $\sigma$. Lastly, since $\sigma'$ is just an instance of regular bin packing, we have $\FF(\sigma') \le 1.7 \cdot OPT(\sigma')$. Combining everything together, we get:
	$\FF(\sigma) = \FF(\sigma')+k_2 \le 1.7\cdot OPT(\sigma')+k_2
       \le 1.7\cdot OPT(\sigma)+k_2$
\end{proof}

Now, we are ready to state the main result of this section.

\begin{theorem}\label{thm:2ub}
	Let $\sigma$ be an input to the \RSiC problem. If the duration of each job in $\sigma$ is exactly $2$ and arrival times are $0$ and $1$ then $\FF(\sigma) \le 2 \cdot OPT(\sigma)$.
\end{theorem}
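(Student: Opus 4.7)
The plan is to prove $\FF(\sigma) \le 2\,\OPT(\sigma)$ by a direct case analysis on the structure of $\FF$'s bins. I would first classify each $\OPT$ bin analogously into type $1, 2,$ or $3$, with counts $n_1, n_2, n_3$, so that $\OPT = 2n_1 + 3n_2 + 2n_3$. The first step is to establish $\FF$ load inequalities via the ``new item did not fit'' principle: for two consecutive $F$-bins in opening order, $x(F_i, 0) + x(F_{i+1}, 0) > 1$; for two consecutive $E$-bins, $x(E_j, 2) + x(E_{j+1}, 2) > 1$; for two consecutive $D$-bins in the $D$-opening order, $x(D_i, 0) + x(D_i, 2) + x(D_{i+1}, 2) > 1$; and, whenever $k_3 \ge 1$, the cross-time inequality $x(F_i, 0) + x(F_i, 2) + \beta > 1$ for every $F$-bin $F_i$, where $\beta$ is the first item placed in $E_1$. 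Standard pairing-and-summing then yields counting bounds such as $k_1 + k_2 \le 2 \lceil A \rceil$, $k_2 \le 2 A_2$, and $k_2 + k_3 \le 2\lceil A_2 + B \rceil$.

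Next I would derive a lower bound on $\OPT$ by LP duality: the capacity constraints $n_1 + n_2 \ge A$ and $n_2 + n_3 \ge B$ together with the objective $\OPT = 2n_1 + 3n_2 + 2n_3$ give $\OPT \ge A + B + \max(A, B)$. This naturally pairs with the decomposition $\FF = (k_1{+}k_2) + (k_1{+}k_2{+}k_3) + (k_2{+}k_3)$ obtained by integrating $\FF(\sigma, t)$ over $[0,3)$.

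The case analysis then proceeds on whether $k_3 = 0$ or $k_3 \ge 1$. In Case $1$ ($k_3 = 0$), every time-$1$ item was absorbed by an $F$-bin; summing $x(D_i, 2) \le 1 - x(D_i, 0)$ over the $D$-bins yields $A_2 \ge B$, hence $A \ge B$ and $2\,\OPT \ge 4A + 2B$, while the $D$-bin counting inequalities bound $\FF$ from above. In Case $2$ ($k_3 \ge 1$), the cross-time inequality forces each $C$-bin to have $x(C_i, 0) > 1 - \beta$, capping $k_1$ tightly in terms of $A_1$ and $\beta$, and similarly each $D$-bin satisfies $x(D_i, 0) + x(D_i, 2) > 1 - \beta$, which interlocks with the $E$-bin bounds on $k_3$. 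Combining these refined bounds with the $\OPT$ lower bound closes the arithmetic in each case. Further subcases may be needed, e.g.\ splitting on whether $\beta$ exceeds $1/2$ or whether individual $D$-bins are dominated by their time-$0$ or time-$1$ content.

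The main obstacle I anticipate is closing the arithmetic to exactly $2$ in the strict (non-asymptotic) regime: the crude counting bounds combined with the LP lower bound do not immediately give $\FF \le 2\,\OPT$, because the natural coordinate-wise bound $k_2 + k_3 \le 2(n_2 + n_3)$ can fail (the $D$-bins of $\FF$ are ``dragged along'' by their time-$0$ contents while $\OPT$ may use no type-$2$ bins at all). The proof must therefore compensate by exploiting the slack in the other counting inequalities, for example by using the strict $1.7$-competitive ratio of $\FF$ bin packing at times $0$ and $1$, or by a more refined bin-by-bin charging tailored to each case. The case analysis is expected to be long but essentially mechanical once the key load inequalities and the $\OPT$ lower bound are in place.
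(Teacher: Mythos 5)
Your scaffolding---the FirstFit ``did not fit'' chain inequalities, the decomposition $\FF = 2k_1+3k_2+2k_3$, and a split on $k_3=0$ versus $k_3\ge 1$---is the same machinery the paper uses, and your closing worry is exactly right: as set up, the argument does not close, and the paper closes it with two ingredients you are missing. First, the lower bound on $\OPT$ is not your LP bound $\OPT \ge A+B+\max(A,B)$ but the utilization bound $\OPT \ge util(\sigma) = 2(A+B)$ (each job contributes $2x_i$ of load-time and every rented server-unit carries load at most $1$); since $A+B\ge\max(A,B)$, the utilization bound strictly dominates yours whenever both $A$ and $B$ are positive, and it is what makes the target in every case simply $4A+4B > \FF-1$. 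Second---and this is the key missing idea---the paper never pushes the load sums all the way to $2\,\OPT$; it proves the \emph{strict} inequality $\FF < 2\,\OPT+1$ and then invokes integrality: every server of either schedule has duration $2$ or $3$, so $\FF$ and $\OPT$ are integers and $\FF < 2\,\OPT+1$ forces $\FF \le 2\,\OPT$. This rounding step is what absorbs the additive slack of roughly one that your ``crude counting bounds'' leave behind; without it the arithmetic indeed cannot reach exactly $2$, and neither of your proposed rescues is the one used in the main cases.

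Two further concrete points. Your case split on $k_3$ alone is too coarse: the summed inequalities such as $2A_2>k_2$ come from pairing consecutive $D$-servers and wrapping around, so they require $k_2\ge 2$ (and likewise for $k_1$, $k_3$); the paper therefore runs nine lemmas over $k_1,k_2\in\{0,1,\ge 2\}$ with a further split on $k_3$ inside each. In the degenerate cases the paper does use exactly the device you mention---shift the time-$1$ items to time $0$ and invoke the strict $1.7$ bound for classical \FF bin packing, paying $+1$ for the one extended server---but only there; it is not the engine of the main argument. Finally, even with the utilization bound and integrality, the generic chains leave you short by about $1/2$: the paper recovers this by bin-specific observations, e.g.\ that one of $x(C_1,0)$, $x(C_{k_1},0)$ exceeds $1/2$ and that the time-$1$ items of $E_{k_3}$ did not fit into the other. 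Your plan would need these targeted recoveries spelled out case by case to be a proof.
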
 
\begin{proof}
We show that $\FF < 2 \cdot OPT + 1$ when \FF uses (a) exactly one type of server  (Lemma~\ref{lem:exactly-one-non-zero}) (b) exactly two types of servers	(Lemma~\ref{lem:k1=0}, ~\ref{lem:k2=0}, and ~\ref{lem:k3=0}) and (c) all three types of servers (Lemma~\ref{lem:general}). Then the result follows using integrality.
\end{proof}


We start with the case where only one type of server is used.

\begin{lemma} \label{lem:exactly-one-non-zero}
If exactly one of $k_1, k_2, k_3$ is non-zero, then $\FF <  2 \cdot OPT + 1$.
\end{lemma}

\begin{proof}
If $k_2 = 0$, we have $\FF \leq 1.7 \OPT$ by Lemma~\ref{lem:ub-BP}. So we assume below that $k_1=k_3=0$ and $k_2 \geq  1$. First note that if $k_2 = 1$, then \FF is optimal. If $k_2 = 2$ then the cost of \FF is $6$ (two servers of duration $3$). Meanwhile, $OPT$ has to open at least $2$ servers at time $0$; at time $1$, $OPT$ either opens a new bin with total cost $6$ or extends at least one server with total cost $\geq 5$. In both cases, the lemma follows. Therefore we assume $k_2 \ge 3$. We have $OPT \ge 2 (A_2 + B_2)$. Observing that $\FF = 3k_2$ and using Equation (\ref{eq:k2-ub3},  we conclude $ \FF < 2\cdot OPT + 1$.
\end{proof}

The next three lemmas consider the cases when exactly two types of servers are present. 

\begin{lemma} \label{lem:k1=0}
If $k_1=0$ and $k_2, k_3 \geq  1$, then $\FF < 2 \cdot OPT + 1$.
\end{lemma}

\begin{proof} We consider the following cases:
\begin{description}

\item[$k_2 = 1$]: Then there were items that arrived at time 0 as well as at time 1. At time 0, $\OPT$ had to open a server for a cost of 2. Since $k_3 \geq 1$, clearly the total size of items at time 1 exceeded 1, and $\OPT$ had to pay a minimum additional cost of 2. Thus  $\OPT \geq 4$, which implies that $1.7 \OPT + k_2 = 1.7 \OPT + 1 \leq 2 \OPT$. By Lemma~\ref{lem:ub-BP}, we have 
$$FF \leq 1.7 OPT + k_2 \leq 2 \OPT$$

\item[ $k_2 = 2$]: If $k_3 =1$, we have $\FF = 8$ and $\OPT \geq 5$, as at least two servers are needed at time 1 for a cost of 4, and $\OPT$ has to pay at least an additional cost of 1 for the items that arrive at time 1. Thus $\FF \leq 2 \OPT + 1$. Therefore let $k_3 \geq 2$. In this case, we claim $\OPT \geq  7$. To see this, observe that $\OPT$ must have opened at least two servers at time 0, and pay a cost of 2 between time 0 and 1. Next between time 1 and 2, since $x(D_1, 0)+ x(D_2, 0)> 1$ and $x(E_1, 2)+x(E_2,2)> 1$, and all these items are active in this time interval, $\OPT$ must pay a cost of at least 3. Finally between time 2 and 3, $\OPT$ must pay a cost of at least 2, for a total cost of at least 7.  We conclude that $1.7 \OPT + k_2 = 1.7 \OPT + 2 \leq 2 \OPT$. The result now follows by Lemma~\ref{lem:ub-BP}.



\item[$k_2 > 2$]:  
 If $k_3 > 1$,  we have $\FF = 3k_2 + 2 k_3$ and $OPT \geq 2(A_2 + B_2 + B_3)$. Therefore 
 $$\FF - 1 = 3k_2 - 1 + 2k_3 <  4 A_2 + 4B_2 + 4 B_3 \leq  2 \cdot OPT $$
 where the second inequality is derived by adding (\ref{eq:k2-ub3}) to twice (\ref{eq:k3-ub}).
	
	If instead $k_3 = 1$,  we have $\FF = 3k_2 + 2$.

 Then we add the inequalities  $x(D_2, 0) +x(D_3,0) > 1, \ldots, x(D_{k_2-1}, 0) + x(D_{k_2}, 0) > 1$ and $x(D_2, 0) + x(D_{k_2}, 0) > 1$ to  obtain:
	\begin{equation}\label{eq:a2exceptD1}
	2A_2 > k_2-1 + 2 x(D_1, 0).
	\end{equation}
	Adding (~\ref{eq:a2exceptD1}) to twice Inequality~\eqref{eq:k2-ub2} and to $4B_3 = 4 x(E_1, 2)$ we obtain:
	\begin{align*}
	4 A + 4 B &> 2k_2-2 + 2 x(D_1, 2) + 2 x(D_{k_2}, 0) + 2 x(D_{k_2}, 2)+ k_2-1 + 2 x(D_1, 0) + 4 x(E_1, 2) \\
	&= 3 k_2 -3 + 2(x(D_1, 0) + x(D_1, 2) + x(E_1, 2)) + 2 (x(D_{k_2}, 0) + x(D_{k_2}, 2) + x(E_1, 2)) \\
	&> 3k_2 + 1 = \FF-1,
	\end{align*}
	where the last inequality follows from $x(D_1, 0) + x(D_1, 2) + x(E_1, 2) > 1$ and $x(D_{k_2}, 0) + x(D_{k_2}, 2) + x(E_1, 2) > 1$. The  lemma follows from Lemma~\ref{lem:opt-lb}. 
\end{description}
\end{proof}

\begin{lemma}\label{lem:k2=0} 
	If $k_2=0$ and $k_1, k_3 \geq 1$ then $\FF < 2 \cdot OPT + 1$.
\end{lemma}
\begin{proof}
Follows from Lemma~\ref{lem:ub-BP}.
\end{proof}

\begin{lemma} \label{lem:k3=0}
If $k_3=0$ and $k_1, k_2 \geq 1$, then $\FF < 2 \cdot OPT + 1$.
\end{lemma}

\begin{proof}
 We consider the following cases: 
\begin{description}

\item[$k_2 =1$]: Observe that there must be items that arrived at time 0 as well as at time 1, since $k_2 \geq 1$.  \FF opens at least two servers at time 0, so it must be that $\OPT$ has to open at least two servers at this time, for a cost of 4. Then $\OPT$ has to pay at least an additional cost of 1 for the items arriving at time 1. Thus $\OPT \geq 5$.
The result now follows from Lemma~\ref{lem:ub-BP} and the fact that $1.7 \OPT + k_2 \leq 2 \OPT$ for $\OPT \geq 5$.

\item[$k_1= 1; k_2 > 1$]:  We have $\FF = 3k_2 + 2$. By adding  $x(D_{k_2}, 0) + x(C_1, 0) > 1$ to Inequality~\ref{eq:k2-ub2}, we obtain
\begin{equation}
A_1 + A_2 + 2B_2 > k_2
\end{equation}
If $A_1 > 1/2$, we have $2A_1>1$, and adding this to twice the above inequality and to Inequality~\ref{eq:k2-ub}, we get
$4A+4B > 3k_2+1$. Applying Lemma~\ref{lem:opt-lb} gives the desired bound. If instead $A_1 \leq 1/2$, we add the following series of inequalities:
		\begin{align*}
		x(D_i, 0)  &> 1 -A_1 \hspace{0.3cm} for \hspace{0.3 cm} i \in [1,..., k_2], and \\
		x(C_1, 0) &= A_1 
		\end{align*}
		
		to obtain:
		\begin{equation}\label{Eq:A_1=1}
		A_2 + A_1 > k_2 - A_1(k_2 - 1)    
		\end{equation}


		By adding $A_1 + x(D_{k_2}, 0) > 1 $ to Inequality~\eqref{Eq:A_1=1} and Inequality~\eqref{eq:k2-ub2} and multiplying by 2, we get:
		\begin{align*}
		4A_1 + 4A_2 + 4B &> 4 k_2 - 2A_1 (k_2 - 1)  + 2 x(D_1, 2) + 2 x(D_{k_2}, 2) \\
  & > 3k_2 + 1 + 2x(D_1, 2) + 2 x(D_{k_2}, 2) > 3k_2 + 1		\end{align*} 
  where the second inequality follows from $A_1 < 1/2$. Once again,
 Lemma~\ref{lem:opt-lb} gives the desired bound.

\item[$k_1, k_2 > 1$]:
	 We know that $ x(C_1, 0)+ x(C_{k_1}, 0) > 1$, therefore at least one of them is greater than $1/2$. If $x(C_{k_1}, 0) > 1/2$,  we add up Inequalities (\ref{in:ci-chain}), (\ref{in:di-chain2}) and (\ref{in:cfirst-dlast} to obtain: 
	\[ 2A_1 - x(C_{k_1},0) + A_2 + 2B  > k_1 + k_2 - 1 + x(D_1,2) + x(D_{k_2},2)\]
	Since $x(C_{k_1}, 0) > 1/2$ we have:
	\begin{equation} \label{eq:k_2+k_1-1/2}
	2 A_1 + A_2 + 2B > k_1 + k_2 - 1/2.
	\end{equation}
 The case $x(C_1, 0) > 1/2$ can be handled similarly to obtain (\ref{eq:k_2+k_1-1/2}), by  using $x(C_{k_1}, 0) + x(D_{k_2}, 0) > 1$ in place of (\ref{in:cfirst-dlast}).	
 
	Adding Inequality~\eqref{eq:k2-ub} and twice Inequality~\eqref{eq:k_2+k_1-1/2} we obtain
	$2 k_1 + 3 k_2 - 1 < 4 A + 4 B.$
	Since $2k_1 + 3k_2 = \FF$, using Lemma~\ref{lem:opt-lb},  we obtain
	$\FF < 2 \cdot OPT + 1$. 
 
\end{description}
\end{proof}

We end with the case when all three types of servers exist.

\begin{lemma} \label{lem:general}
If $k_1,k_2, k_3 \geq  1$, then $\FF <  2 \cdot OPT$ + 1.
\end{lemma}

\begin{proof}
We consider the following cases:

\begin{description}
\item[$k_2 =1$]: As in the proof of Lemma~\ref{lem:k3=0}, the lemma follows from Lemma~\ref{lem:ub-BP} and the  fact that $\OPT \geq 5$.

\item[$k_2 =2$]: If $k_1=k_3 = 1$ then $\FF = 10$ and $\OPT \geq 5$ as it has to open at least 2 servers at time 0 for a cost of 4, and pay at least an additional cost of 1 to accommodate items arriving at time 1. Therefore $\FF < 2 \OPT + 1$ in this case. Otherwise either $k_1 \geq 2$ or $k_2 \geq 2$. We claim that $\OPT \geq 7$ in both cases. If $k_1\geq 2$, then since $\FF$ opens at least 4 servers at time 0, it follows from the bin packing upper bound on \FF that $\OPT$ needs at least 3 servers at time 0 for a cost of 6; and then must pay at least an additional cost of 1 at time 1 to accommodate items arriving at time 1 for a total cost of at least 7. If $k_3 \geq  2$, then $OPT$ needs to pay at least cost $2$ between times 0 and 1, at least 3 between time 1 and 2, and at least 2 between time 2 and 3, for a total cost of at least 7. This completes the proof of the claim. 
Since $\OPT \geq 7$, by  Lemma~\ref{lem:ub-BP}, we have $\FF \leq 1.7 \OPT + 2 \leq 2 \OPT$ as needed.

\item[$k_2 > 2$]: Observe that $\FF = 2k_1 + 3k_2 + 2k_3$. Adding Inequality (\ref{eq:k2-ub3} to twice (\ref{eq:k1-and-k3-ub}), we obtain
\[ \FF - 1 = 2k_1 + 3k_2 + 2k_3 - 1 <  4A + 4B \leq  2\OPT \]

\end{description}
\end{proof}

\subsection{Upper Bound for Long-Running Uniform Servers}\label{sec:long_servers}

In this section we consider the case when all jobs in the input sequence $\sigma$ have duration $2$ and arrival times $0, 1, 2, \ldots, \ell$, and \FF packs these items into servers starting at time $0$ and finishing at time $\ell+2$. Thus, as we let $\ell$ go to infinity, this setting represents ``long-running'' uniform servers of \FF. We show that asymptotically (as $\ell \rightarrow \infty$) such \FF servers have amortized load of $2/3$ at all times. We also observe that this bound is tight, i.e., there are inputs on which long-running \FF servers have load of roughly $2/3$ at all times. Thus, long-running servers are beneficial for \FF since load of $2/3$ translates to competitive ratio of $3/2$ when \FF cost is compared to the cost of $OPT$. This suggests that worst-case adversarial instances for \FF on equal duration jobs should be such that \FF servers are short-lived.

We begin with a few basic observations about long running servers in the next lemma, before giving our main result in Lemma~\ref{lem:uniform_long_servers}.
\begin{lemma}\label{lemma:observations}
	Fix $\ell \in \mathbb{N}$. Let $\sigma$ be such that all jobs in $\sigma$ have duration $2$ and arrival times $0, 1, 2, \ldots, \ell$ and let $\mathcal{B}$ be a set of \FF servers of duration exactly $\ell+2$ that were opened at time $0$. 
	We denote the servers in $\mathcal{B}$ by $B_1, B_2, \ldots, B_k$ (opened in this order), where $k = |\mathcal{B}|$.
	
	For $i \in \mathbb{Z}_{\ge 0}$ we define the layer $i$, denoted by $L_i$, to be the set of all items packed in $\mathcal{B}$ that arrived at time $i$. Let $x(L_i)$ denote the size of all items in layer $i$. 
	Assume that $k \ge 2$. Then we have:
	\begin{enumerate}
		\item The cost of \FF arising from servers in $\mathcal{B}$ is $k(\ell+2).$
		\item $x(L_0) > k/2$;
		\item $x(L_i) + x(L_{i-1})/2 > (k-1)/2$ for $i \in \{1, \ldots, \ell\}$
	\end{enumerate}
	
\end{lemma}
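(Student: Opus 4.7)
The plan is to prove each of the three claims by first establishing pointwise inequalities between adjacent server loads via a ``triggering item'' argument, and then summing with appropriate bookkeeping. Throughout, let me write $\lambda_i^{(j)}$ for the total size of layer-$i$ items placed in $B_j$, so $x(L_i) = \sum_{j=1}^{k} \lambda_i^{(j)}$. Part~(1) is a one-liner: every $B_j \in \mathcal{B}$ is rented for exactly $\ell+2$ units, so the total cost contributed by $\mathcal{B}$ is $k(\ell+2)$.

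For Part~(2), I would consider, for each $j \in \{2, \ldots, k\}$, the triggering item of $B_j$---the first item (of some size $y_j$) placed in $B_j$, which necessarily arrived at time $0$ and failed to fit in $B_{j-1}$ at that moment. Since the partial load of $B_{j-1}$ at the triggering instant is bounded above by the final value $\lambda_0^{(j-1)}$, and since $y_j \le \lambda_0^{(j)}$, we obtain $\lambda_0^{(j-1)} + \lambda_0^{(j)} > 1$ for every $j \in \{2,\ldots,k\}$. The triggering item of $B_k$ moreover failed to fit in $B_1$, producing the ``wraparound'' inequality $\lambda_0^{(1)} + \lambda_0^{(k)} > 1$. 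Summing the resulting $k$ inequalities makes every $\lambda_0^{(j)}$ appear with multiplicity exactly $2$, so $2\,x(L_0) > k$, i.e., $x(L_0) > k/2$.

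For Part~(3), I would lift this triggering idea to time $i \in \{1,\ldots,\ell\}$. For each $j \in \{2,\ldots,k\}$ such that $B_j$ actually receives a layer-$i$ item, let the first such item have size $y_j^i$; it failed to fit in $B_{j-1}$ at arrival, and the load of $B_{j-1}$ at that moment is bounded above by $\lambda_{i-1}^{(j-1)} + \lambda_i^{(j-1)}$. Combined with $y_j^i \le \lambda_i^{(j)}$ this yields
\[
\lambda_{i-1}^{(j-1)} + \lambda_i^{(j-1)} + \lambda_i^{(j)} > 1.
\]
An analogous wraparound inequality can be extracted either from the first layer-$i$ item landing in $B_k$, or, if no such item reaches $B_k$, from a layer-$i$ item that opens a fresh server outside $\mathcal{B}$ (which by definition fails to fit in both $B_1$ and $B_k$). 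Summing these $k$ inequalities across $j$ so that each $\lambda_{i-1}^{(j)}$ appears once and each $\lambda_i^{(j)}$ appears twice telescopes to $x(L_{i-1}) + 2\,x(L_i) > k-1$, which is exactly $x(L_i) + x(L_{i-1})/2 > (k-1)/2$.

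The main obstacle will be Part~(3)'s edge case: servers $B_j$ that receive no layer-$i$ item at all, for which the per-$j$ inequality above is vacuous. To handle these, I would invoke that $B_j \in \mathcal{B}$ must remain in \FF's active list throughout $[0,\ell+2)$, which forces $B_j$ to hold an item from layer $i-1$ or $i$; when $\lambda_i^{(j)} = 0$ one can substitute an inequality derived either from the triggering of $B_j$ at time $i-1$ or from an external layer-$i$ item that opens a new server and hence also fails to fit in $B_j$. Weaving these hybrid cases into a single summation that still delivers $x(L_{i-1}) + 2\,x(L_i) > k-1$ is where the delicate combinatorial accounting lies.
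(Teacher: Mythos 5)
Your argument for Parts (1) and (2) coincides with the paper's: the adjacent inequalities $\lambda_0^{(j-1)}+\lambda_0^{(j)}>1$ for $j=2,\ldots,k$ plus the inequality $\lambda_0^{(1)}+\lambda_0^{(k)}>1$ (the paper phrases this as adding $\gamma_1+\gamma_k$ to both sides of the summed chain and then bounding $\gamma_1+\gamma_k>1$, which is the same computation). For Part (3) your main chain $\lambda_{i-1}^{(j-1)}+\lambda_i^{(j-1)}+\lambda_i^{(j)}>1$ is also exactly the paper's. One bookkeeping remark: the wraparound inequality you introduce for Part (3) is unnecessary, and it does not actually produce the multiplicities you claim ($\lambda_{i-1}^{(1)}$ ends up counted twice and $\lambda_{i-1}^{(k)}$ not at all, though the resulting slack still happens to cover the bound). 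The paper instead sums only the $k-1$ adjacent inequalities and then pads the left-hand side with the nonnegative boundary terms $\lambda_i^{(1)}+\lambda_i^{(k)}+\lambda_{i-1}^{(k)}$ to complete the multiplicities; since only a lower bound is needed, this is simpler and avoids the wraparound entirely.

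The edge case you flag in Part (3) --- a server $B_j$ with $\lambda_i^{(j)}=0$ --- is a genuine issue, and your proposal does not resolve it (you say as much). When $\lambda_i^{(j)}=0$ the asserted inequality degenerates to $\lambda_{i-1}^{(j-1)}+\lambda_i^{(j-1)}>1$, which is impossible because that quantity is the load of $B_{j-1}$ at time $i$ and hence at most $1$. Your proposed substitutes do not close the hole: a layer-$i$ item that opens a server outside $\mathcal{B}$ is, by the definition of $L_i$, not counted in $x(L_i)$, so the inequality it yields involves a size absent from the target left-hand side; and a ``triggering at time $i-1$'' concerns the pair of layers $(i-2,i-1)$, not $(i-1,i)$. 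Indeed, a configuration with three servers each holding a single layer-$0$ item of size just above $1/2$ and receiving no layer-$1$ item gives $x(L_1)+x(L_0)/2\approx 0.8<(k-1)/2=1$, so no accounting trick can save the inequality there --- the correct fix is to assume (as the paper's proof implicitly does, since its justification is vacuous otherwise) that every server in $\mathcal{B}$ receives an item from each layer $i\in\{1,\ldots,\ell\}$, and to state that hypothesis explicitly before summing.
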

\begin{proof}
	
	\begin{enumerate}
		
		\item Immediate from the definition of the cost and the fact that each server in $\mathcal{B}$ has duration exactly $\ell+2$. 
		\item This is a standard argument about \FF. Let $\gamma_i = x(L_0 \cap B_i)$ denote the total size of items in server $B_i$ that arrived at time $0$. We have $\gamma_i + \gamma_{i-1} > 1$ for $i \in \{2, 3, \ldots, k\}$ since otherwise items in server $B_i$ would have been placed in server $B_{i-1}$ instead.
		Adding up all these inequalities we get:
		
		\[ \sum_{i=2}^k (\gamma_{i-1} + \gamma_i) > k-1.\]
		
		By adding $\gamma_1$ and $\gamma_k$ to both sides, we obtain 
		$ 2 \sum_{i=1}^k \gamma_i > k-1 + \gamma_1 + \gamma_k.$
		Observe that $\gamma_1 + \gamma_k > 1$ by the same reasoning as before. Moreover, we have $\sum_{i=1}^k \gamma_i = x(L_0)$ by definition. Combining all these facts, establishes this part of the lemma.
		\item Fix $i \in \{1, \ldots, \ell\}$. Let $\gamma_j = x(L_{i-1} \cap B_j)$ denote the total size of items in server $B_j$ that arrived at time $i-1$. Similarly, let $\delta_j = x(L_i \cap B_j)$ denote the total size of items in server $B_j$ that arrived at time $i$. For $j \in \{2, 3, \ldots, k\}$ we have:
		
		\[ \delta_j +\gamma_{j-1} + \delta_{j-1} > 1\]

		otherwise items packed into bin $B_j$ at time $i$ should have been placed into bin $j-1$ by \FF. Summing all these inequalities we obtain:
		\[ \sum_{j=2}^k (\delta_j + \delta_{j-1} + \gamma_{j-1}) > k-1\]
		
		By adding $\delta_1 + \delta_k + \gamma_k$ to both sides we obtain:
		\[ 2 \sum_{j=1}^k \delta_j + \sum_{j=1}^k \gamma_j > k-1 + \delta_1 + \delta_k + \gamma_k.\]
		We have $\sum_{j=1}^k \delta_j = x(L_i)$ and $\sum_{j=1}^k \gamma_j = x(L_{i-1})$. Therefore, we obtain:
		
		\[ 2 x(L_i) + x(L_{i-1}) > k-1 + \delta_1 + \delta_k + \gamma_k.\]
	\end{enumerate}
\end{proof}
Now from the results in Lemma~\ref{lemma:observations}, we can establish strong upper bounds on the ratio between utilization and $\FF$ cost, as follows.
\begin{lemma}\label{lem:uniform_long_servers} Let $\sigma$ be the input such that each job has duration $2$ and arrival times $0, 1, 2, \ldots, \ell$ and suppose \FF opens $k \ge 2$ servers and each server of \FF on $\sigma$ starts at time $0$ and finishes at time $\ell + 2$. Then we have:
	\[ \frac{util(\sigma)}{\FF(\sigma)} > \frac{2}{3} - \frac{2}{3k} - \frac{2}{3(\ell+2)}.\]
where $util(\sigma)$ is the utilization of the input sequence which is defined as the total volume (size times duration) of all jobs.	
	
\end{lemma}
\begin{proof}
	Fix $\ell \ge 2$. From the results in Lemma~\ref{lemma:observations}, we have:\\
	\begin{align*}
	IN_i: x(L_{\ell-i}) + x(L_{\ell-i-1})/2 > (k-1)/2:&   \hspace{1cm} i\in \{0, 1, \ldots, \ell-1\} \\
	IN_\ell: x(L_0) > k/2 > (k-1)/2: &   \hspace{1cm} o.w\\
	\end{align*}
	
We will choose multipliers $f_0, f_1, \ldots, f_\ell$ such that the linear combination of inequalities $\sum_{i = 0}^{\ell} f_i IN_i$ has $util(\sigma)/2$ 
on the left hand side. 
The multipliers $f_i$ are defined recursively:
	
	\[f_i = \left\{ \begin{array}{ll}
	1 & \text{ if } i = 0,\\
	1-f_{i-1}/2 & \text{ if } i \ge 1.\\
	\end{array} \right.
	\]
 
Observe that the multipliers satisfy $f_i + \frac{1}{2} f_{i-1} = 1$ for $i \ge 1$. Next, we verify that $\sum_{i=0}^\ell f_i IN_i$ has $\sum_{i=0}^\ell x(L_i)$ on the left-hand side. For that it is convenient to think of inequality $IN_\ell$ as $x(L_0) + x(L_{-1})/2 > k/2$, where we define $x(L_{-1})/2 = 0$.
\begin{align*}
    \sum_{i = 0}^\ell f_i(x(L_{\ell-i}) + x(L_{\ell-i-1})/2) &= \sum_{i=0}^\ell f_i x(L_{\ell-i}) + \frac{1}{2} \sum_{i = 0}^\ell f_i x(L_{\ell-i-1})\\
    &=  f_0 x(L_\ell) + \sum_{i=1}^\ell f_i x(L_{\ell-i}) + \frac{1}{2} \sum_{i = 1}^\ell f_{i-1} x(L_{\ell-i})\\
    &= x(L_\ell) + \sum_{i=1}^\ell x(L_{\ell-i})\left(f_i + f_{i-1}/2 \right) = util(\sigma)/2,
\end{align*}
where the last equality is because the sum of sizes is half of utilization, since all items have duration $2$.

Let $F = \sum_{i = 0}^\ell f_i$. To compute a bound on $F$ observe the following: 
\begin{align*}
    \ell+1 &= f_0 + \sum_{i = 1}^\ell (f_i + f_{i-1}/2)  = \sum_{i = 0}^\ell f_i + \frac{1}{2} \sum_{i = 0}^{\ell-1} f_i \\
    &= F + \frac{1}{2} (F - f_\ell) = \frac{3}{2} F - \frac{1}{2} f_\ell.
\end{align*}
Thus, we can conclude that $F \ge \frac{2}{3} (\ell+1).$ This implies that the right-hand side of $\sum_{i=0}^\ell f_i IN_i$ is  $\frac{2}{3}F (k-1)/2 \geq \frac{2}{3}(\ell+1)(k-1)/2$. Combining with the calculation of the left-hand side, we obtain:
\[ util(\sigma) \ge \frac{2}{3} (k-1)(\ell+1).\]
 Since $\FF(\sigma) = k(\ell+2)$,  we conclude:
	 \[ \frac{util(\sigma)}{\FF(\sigma)} > \frac{2}{3} \frac{(k-1)(\ell+1)}{k(\ell+2)} = \frac{2}{3} \left( 1 - \frac{1}{k} \right) \left( 1-\frac{1}{\ell+2}\right) > \frac{2}{3} - \frac{2}{3k} - \frac{2}{3(\ell+2)}.\]

\end{proof}
\begin{corollary}\label{corollary:uniform_long_servers}
	On the inputs described in Lemma~\ref{lem:uniform_long_servers} with $k,\ell \rightarrow \infty$ \FF achieves competitive ratio $3/2$.
\end{corollary}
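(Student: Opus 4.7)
The plan is simple: combine the utilization-to-cost bound in Lemma~\ref{lem:uniform_long_servers} with the obvious volume lower bound on $OPT$ and then take limits. The heavy lifting has already been done by Lemma~\ref{lem:uniform_long_servers}, so the corollary is essentially a one-line consequence.

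First, I will establish that $OPT(\sigma) \ge util(\sigma)$ for any input $\sigma$. This is a standard fact: at every time $t$, the sum of loads of all servers active in the optimal schedule is bounded above by the number of active servers, since each individual load lies in $[0,1]$. Integrating this pointwise inequality over $[0, \ell+2]$ yields
\[ util(\sigma) \;=\; \int_0^{\ell+2} \sum_{B \text{ active at } t} x(B,t) \, dt \;\le\; \int_0^{\ell+2} OPT(\sigma,t) \, dt \;=\; OPT(\sigma). \]

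Next, I will chain this with Lemma~\ref{lem:uniform_long_servers} to obtain
\[ \frac{\FF(\sigma)}{OPT(\sigma)} \;\le\; \frac{\FF(\sigma)}{util(\sigma)} \;<\; \frac{1}{\tfrac{2}{3} - \tfrac{2}{3k} - \tfrac{2}{3(\ell+2)}}, \]
and letting $k, \ell \to \infty$ the right-hand side tends to $1/(2/3) = 3/2$, which is the claimed bound.

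There is essentially no obstacle here; the only point worth emphasizing is the correct reading of the statement. The corollary bounds the competitive ratio only on the restricted family of inputs described in Lemma~\ref{lem:uniform_long_servers} (equal-duration-$2$ jobs with arrival times $0,1,\ldots,\ell$ for which \FF happens to open $k$ uniform servers of duration exactly $\ell+2$). Thus the precise reading is: for every $\varepsilon > 0$ there exist $k_0, \ell_0$ such that for all $k \ge k_0$, $\ell \ge \ell_0$, and every $\sigma$ in that family with parameters $k, \ell$, we have $\FF(\sigma)/OPT(\sigma) < 3/2 + \varepsilon$. Tightness of the constant $3/2$ follows from the tightness of the $2/3$ bound on average load noted in the text preceding Lemma~\ref{lem:uniform_long_servers}, combined with the existence of an $OPT$ packing whose servers are nearly fully loaded on such instances.
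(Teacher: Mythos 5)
Your proposal is correct and follows essentially the same route as the paper: the paper's proof is exactly the observation that $OPT(\sigma) \ge util(\sigma)$ combined with $util(\sigma)/\FF(\sigma) = 2/3 + o(1)$ from Lemma~\ref{lem:uniform_long_servers}, which is what you do (with the standard integration argument for $OPT \ge util$ spelled out). Your additional remark on the precise quantifier structure and on tightness of the constant via the example following the corollary is consistent with the paper's presentation.
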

\begin{proof}
	The upper bound on the competitive ratio follows from Lemma~\ref{lem:uniform_long_servers}, since we have $OPT(\sigma) \ge util(\sigma)$ and $util(\sigma)/\FF(\sigma) = 2/3 + o(1)$.
\end{proof}
The bound on utilization in Lemma~\ref{lem:uniform_long_servers} is tight asymptotically as $k, \ell \rightarrow \infty$. It is witnessed by the following example. Fix $k, \ell \in \mathbb{N}$ and $\epsilon = 1/(k \ell)$.
\begin{itemize}
	\item at time $0$ we present $k$ items of size $2/3$ each;
	\item at odd times $1, 3, 5, \ldots$ we present $k$ items of size $1/3$ each;
	\item at even times $2, 4, 6, \ldots$ we present $k$ items of size $1/3 +\epsilon$ each.
\end{itemize}
Arrival time of the last job is $\ell$ in the above. Assume $\ell$ is even for simplicity of the presentation.
As can be seen in Figure~\ref{fig:Layers-Fig}, \FF opens $k$ servers of duration $\ell+2$ each and utilization is $2 (2/3) k + 2 (1/3) k (\ell/2) + 2 (1/3 + \epsilon) k (\ell/2) = (2/3) k (\ell+2) + \epsilon k \ell = (2/3) k (\ell+2) + 1$. Thus, for this example we have:
\begin{align*}
\frac{util(\sigma)}{\FF(\sigma)} &= \frac{(2/3) k (\ell+2) + 1}{k(\ell+2)} 
= \frac{2}{3}  + \frac{1}{k(\ell+2)} = \frac{2}{3} + o(1).
\end{align*}
\[ \]

\begin{figure}[H]
	\begin{center}
		\includegraphics[scale=0.7]{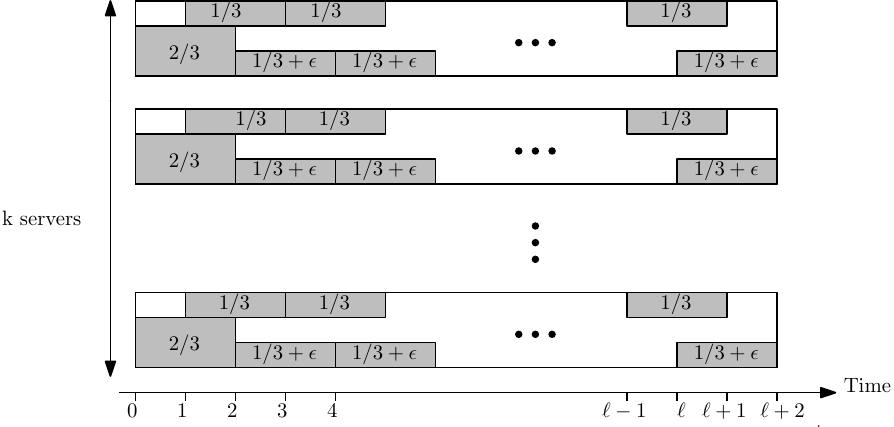}
		\caption{The bound in Corollary~\ref{corollary:uniform_long_servers} is shown in a tight example.}
		\label{fig:Layers-Fig}
	\end{center}
\end{figure}

\section{Conclusion}~\label{Section:Conclusion}
In this paper, we investigated the \RSiC problem to assign jobs to servers while minimizing the total cost of all rented servers. We considered the case where jobs have equal duration. We derived new bounds on the competitive ratios of two commonly used and well-studied algorithms, \NF and \FF. We proved that \NF is $2$-competitive. In the case that the arrival time of jobs belong to ${0, t}$ when $t \in (0, 1)$, the asymptotic competitive ratio of \FF is at least $\frac{34}{27} (t+1)$. This result shows that \FF has ratio $1.\overline{8}$ when $t = 1/2$ and has ratio $2.\overline{518}$ when $t \rightarrow 1$. Furthermore, we applied the weight function technique to the \RSiC problem for the first time. By using this technique, we established the upper bound on the competitive ratio of $\frac{168}{131}(1+t)$ for the case of jobs having arrival ${0, t}$ where $t > 0.559$. When $t \rightarrow 1$, this gives an upper bound of $2.559$ on the asymptotic competitive ratio of \FF for this case. 
Besides that, we obtained a strict competitive ratio of $2$ for \FF, where each job has a duration of $2$ and arrival time of $0$ or $1$.  Finally, it would be interesting to investigate the general case of \RSiC and attempt to close the gap between the upper and lower bounds for this problem.

\section{Acknowledgments}
This research is supported by the Natural sciences and Engineering Research Council of Canada (N.S.E.R.C). We would also like to thank the anonymous referees who provided detailed comments that helped us prepare a significantly improved version of the paper.
\bibliographystyle{abbrv}
\bibliography{references1}
\end{document}